\documentclass[11pt,a4paper,oneside,british,article]{memoir}

\usepackage[a-3u]{pdfx}
\catcode30=12                   

\settrims{0pt}{0pt}
\setlrmarginsandblock{1in}{*}{1}
\checkandfixthelayout

\usepackage[utf8]{inputenc}
\usepackage[T1]{fontenc}
\usepackage{amsmath}
\usepackage{amssymb}
\usepackage{amsthm}
\usepackage{mathtools}
\usepackage[british, UKenglish]{babel}
\usepackage{microtype}
\usepackage{rotating}
\usepackage{stackengine}
\DeclareUnicodeCharacter{1EC5}{\stackon[-5.5pt]{ê}{\~{}}}
\usepackage[natbib, doi, backref, useprefix, citestyle=alphabetic, bibstyle=alphabetic, maxbibnames=20, minbibnames=5]{biblatex}
\DefineBibliographyExtras{british}{}
\addbibresource{ms.bib}
\usepackage{newpxtext,newpxmath}
\usepackage{cleveref}

\epigraphfontsize{\small\itshape}
\setlength\epigraphwidth{7cm}
\setlength\epigraphrule{0pt}

\usepackage{adforn}
\newcommand*{\sectionbreak}{\fancybreak{\adfflourishleftdouble\quad\adfast9\quad\adfflourishrightdouble}}

\newcommand{\R}{\mathbb{R}}

\newcommand{\Npos}{\mathbb{N}_1}
\newcommand{\K}{\mathbb{K}}
\newcommand{\CC}{\mathbb{C}}

\newcommand{\T}{\transp}
\newcommand{\bigO}{\mathcal{O}}
\newcommand{\eps}{\varepsilon}
\newcommand{\zerovector}{\ensuremath{\mathbf{0}}}

\providecommand{\eqdef}{\coloneq}

\DeclareMathOperator{\spn}{span}
\DeclareMathOperator{\tail}{tail}
\newcommand{\dotprod}[3][]{\mathopen#1\langle #2 , #3 \mathclose#1\rangle}
\newcommand{\abs}[2][]{\mathopen#1\lvert #2 \mathclose#1\rvert}
\newcommand{\norm}[3][]{\mathopen#1\lVert #2 \mathclose#1\rVert_#3}

\DeclareMathOperator*{\argmin}{arg\,min}
\DeclareMathOperator*{\median}{median}

\makeatletter
\newtheorem*{rep@theorem}{\rep@title}
\newcommand{\newreptheorem}[2]{%
\newenvironment{rep#1}[1]{%
 \def\rep@title{#2 \ref{##1}}%
 \begin{rep@theorem}}%
 {\end{rep@theorem}}}
\makeatother

\newtheorem{theorem}{Theorem}[chapter]
\newreptheorem{theorem}{Theorem}
\newtheorem{lemma}[theorem]{Lemma}
\newreptheorem{lemma}{Lemma}

\newtheorem{definition}[theorem]{Definition}

\newtheorem{proposition}[theorem]{Proposition}
\newreptheorem{proposition}{Proposition}
\newtheorem{corollary}[theorem]{Corollary}
\newreptheorem{corollary}{Corollary}

\newreptheorem{conjecture}{Conjecture}

\setsecnumdepth{subsection}

\begin{document}

\title{An Introduction to Johnson--Lindenstrauss Transforms}
\author{Casper Benjamin Freksen\thanks{Work done while at the Computer
    Science Department, Aarhus University.
    \href{mailto:casper@freksen.dk}{\texttt{casper@freksen.dk}}}}
\date{\today}

\maketitle

\begin{abstract}
  Johnson--Lindenstrauss Transforms are powerful tools for reducing
  the dimensionality of data while preserving key characteristics of
  that data, and they have found use in many fields from machine
  learning to differential privacy and more. This note explains what
  they are; it gives an overview of their use and their development
  since they were introduced in the 1980s; and it provides many
  references should the reader wish to explore these topics more
  deeply.

  The text was previously a main part of the introduction of my PhD
  thesis~\citep{Freksen:2020:aSoJaL}, but it has been adapted to
  be self contained and serve as a (hopefully good) starting point for
  readers interested in the topic.
\end{abstract}


\chapter{The Why, What, and How}
\label{cha:introduction}

\section{The Problem}
\label{sec:intro:problem}

Consider the following scenario: We have some data that we wish to
process but the data is too large, e.g.\@ processing the data takes
too much time, or storing the data takes too much space. A solution
would be to compress the data such that the valuable parts of the data
are kept and the other parts discarded. Of course, what is considered
valuable is defined by the data processing we wish to apply to our
data. To make our scenario more concrete let us say that our data
consists of vectors in a high dimensional Euclidean space, $\R^d$, and
we wish to find a transform to embed these vectors into a lower
dimensional space, $\R^m$, where $m \ll d$, so that we can apply our
data processing in this lower dimensional space and still get
meaningful results. The problem in this more concrete scenario is
known as \emph{dimensionality reduction}.

As an example, let us pretend to be a library with a large corpus of
texts and whenever a person returns a novel that they liked, we would
like to recommend some similar novels for them to read next. Or
perhaps we wish to be able to automatically categorise the texts into
groups such as fiction/non-fiction or child/young-adult/adult
literature. To be able to use the wealth of research on similarity
search and classification we need a suitable representation of our
texts, and here a common choice is called bag-of-words. For a language
or vocabulary with $d$ different words, the bag-of-words
representation of a text $t$ is a vector $x \in \R^d$ whose $i$th
entry is the number of times the $i$th word occurs in $t$. For
example, if the language is just [``be'', ``is'', ``not'', ``or'',
``question'', ``that'', ``the'', ``to''] then the text ``to be or not
to be'' is represented as $(2, 0, 1, 1, 0, 0, 0, 2)^\T$. To capture
some of the context of the words, we can instead represent a text as
the count of so-called $n$-grams\footnote{These are sometimes referred
  to as shingles.}, which are sequences of $n$ consecutive words,
e.g.\@ the 2-grams of ``to be or not to be'' are [``to be'', ``be
or'', ``or not'', ``not to'', ``to be''], and we represent such a
bag-of-$n$-grams as a vector in $\R^{(d^n)}$. To compare two texts we
compute the distance between the vectors of those texts, because the
distance between vectors of texts with mostly the same words (or
$n$-grams) is small\footnote{We might wish to apply some normalisation
  to the vectors, e.g.\@ tf-idf~\citep{Leskovec:2020:MoMD:1}, so that
  rare words are weighted more and text length is less
  significant.}. For a more realistic language such as English with
$d \approx 171 000$ words~\citep{Simpson:1989:OED} or that of the
``English'' speaking internet at $d \gtrsim 4 790 000$
words~\citep{Williams:2005:SWotW}, the dimension quickly becomes
infeasable. While we only need to store the nonzero counts of words
(or $n$-grams) to represent a vector, many data processing algorithms
have a dependency on the vector dimension $d$ (or $d^n$), e.g.\@ using
nearest-neighbour search to find similar novels to
recommend~\citep{Andoni:2017:NNiHDS} or using neural networks to
classify our texts~\citep{Schmidt:2018:SRPLGPDRfDL}. These algorithms
would be infeasible for our library use case if we do not first reduce
the dimension of our data.

A seemingly simple approach would be to select a subset of the
coordinates, say if the data contained redundant or irrelevant
coordinates. This is known as \emph{feature
  selection}~\citep{Jensen:2008:CIaFSRaFA, Hastie:2017:tEoSLDMIaP,
  James:2013:aItSL:6}, and can be seen as projecting\footnote{Here we
  slightly bend the definition of projection in the sense that we
  represent a projected vector as the coefficients of the of the
  linear combination of the (chosen) basis of the subspace we project
  onto, rather than as the result of that linear combination. If
  $A \in \R^{m \times d}$ is a matrix with the subspace basis vectors
  as rows, we represent the projection of a vector $x \in \R^d$ as the
  result of $Ax \in \R^m$ rather than $A^\T A x \in \R^d$.} onto an
axis aligned subspace, i.e.\@ a subspace whose basis is a subset of
$\{e_1, \dotsc, e_d\}$.

We can build upon \emph{feature selection} by choosing the basis from
a richer set of vectors. For instance, in principal component analysis
as dimensionality reduction (PCA)~\citep{Pearson:1901:oLaPoCFtSoPiS,
  Hotelling:1933:AoaCoSViPC} we let the basis of the subspace be the
$m$ first eigenvectors (ordered decreasingly by eigenvalue) of
$X^\T X$, where the rows of $X \in \R^{n \times d}$ are our $n$ high
dimensional vectors\footnote{Here it is assumed that the mean of our
  vectors is $\zerovector$, otherwise the mean vector of our vectors
  should be subtracted from each of the rows of $X$.}. This subspace
maximises the variance of the data in the sense that the first
eigenvector is the axis that maximises variance and subsequent
eigenvectors are the axes that maximise variance subject to being
orthogonal to all previous eigenvectors~\citep{Leskovec:2020:MoMD:11,
  Hastie:2017:tEoSLDMIaP, James:2013:aItSL:10}.

But what happens if we choose a basis randomly?

\section{The Johnson--Lindenstrauss Lemma(s)}
\label{sec:intro:jl_lemma}

In \citeyear{Johnson:1984:EoLMiaHS}\footnote{Or rather in 1982 as that
  was when a particular ``Conference in Modern Analysis and
  Probability'' was held at Yale University, but the proceedings were
  published in \citeyear{Johnson:1984:EoLMiaHS}.} it was discovered
that projecting onto a random basis approximately preserves pairwise
distances with high probability. In order to prove a theorem regarding
Lipschitz extensions of functions from metric spaces into $\ell_2$,
\citet{Johnson:1984:EoLMiaHS} proved the following lemma.

\begin{lemma}[Johnson--Lindenstrauss lemma~\citep{Johnson:1984:EoLMiaHS}]
  \label{thm:jl}
  For every $d \in \Npos, \eps \in (0, 1)$, and $X \subset \R^d$,
  there exists a function $f \colon X \to \R^m$, where
  $m = \Theta(\eps^{-2} \log |X|)$ such that for every $x, y \in X$,
  \begin{equation}
    \label{eq:jl}
    \bigl|\, \|f(x) - f(y)\|_2^2 - \|x - y\|_2^2 \,\bigr| \leq \eps \|x - y\|_2^2.
  \end{equation}
\end{lemma}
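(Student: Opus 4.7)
The plan is to prove the lemma by the probabilistic method: construct a random linear map $f$ and show that, with positive probability, it satisfies the stated inequality simultaneously for all pairs in $X$. Concretely, I would take $f(x) = A x$, where $A \in \R^{m \times d}$ has independent entries $A_{ij} \sim \mathcal{N}(0, 1/m)$. Because $f$ is linear, the pair condition reduces to controlling $\bigl| \|A v\|_2^2 - \|v\|_2^2 \bigr|$ for a single vector $v = x - y$, and by homogeneity I may assume $\|v\|_2 = 1$. Thus the whole argument boils down to a \emph{distributional} Johnson--Lindenstrauss statement: for every fixed unit $v$,
\begin{equation*}
\Pr\!\bigl[\, \bigl|\|A v\|_2^2 - 1\bigr| > \eps \,\bigr] \leq 2 \exp(-c \eps^2 m)
\end{equation*}
for some absolute constant $c > 0$.

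The crux is therefore this concentration inequality. Rotational invariance of the Gaussian distribution makes the coordinates $(A v)_i \sim \mathcal{N}(0, 1/m)$ independent across $i \in \{1, \dotsc, m\}$ whenever $\|v\|_2 = 1$, so $Z := m \|A v\|_2^2$ is chi-squared with $m$ degrees of freedom. A Chernoff-style argument using the explicit moment generating function $\E[e^{s Z}] = (1 - 2 s)^{-m/2}$ (for $s < 1/2$), optimised over $s$, yields sub-gamma tails of the form $\exp(-\Theta(\eps^2 m))$ throughout $\eps \in (0, 1)$. This chi-squared tail bound is the main technical obstacle, although it is by now routine: once the moment generating function is in hand, everything reduces to bounding $\log(1+\eps) - \eps$ and $\log(1-\eps) + \eps$ by $-\Theta(\eps^2)$ on the chosen range, which follows from Taylor expansion.

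With the distributional statement in hand, I would take a union bound over the at most $\binom{|X|}{2}$ difference vectors. Choosing $m = \lceil C \eps^{-2} \ln |X| \rceil$ with $C$ a sufficiently large constant (relative to $c$) drives the total failure probability strictly below $1$, so there must exist a realisation of $A$ for which the corresponding $f$ witnesses \eqref{eq:jl} on every pair in $X$. This yields the claimed dimension $m = \Theta(\eps^{-2} \log |X|)$ and proves the lemma; in fact, the same argument shows that a uniformly random Gaussian $A$ succeeds with high probability, which is the algorithmically useful strengthening and is the perspective I would emphasise for the rest of the exposition.
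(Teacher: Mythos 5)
Your proposal is correct and follows the same skeleton as the paper's proof: reduce \cref{eq:jl} to a distributional statement about a single (unit) difference vector, then union bound over the $\binom{|X|}{2}$ pairs with per-pair failure probability $|X|^{-2}$. The one genuine difference is the choice of random matrix witnessing the distributional lemma. The paper follows \citet{Johnson:1984:EoLMiaHS} and takes $f(x) = (d/m)^{1/2}Ax$ with $A$ the first $m$ rows of a random orthogonal matrix, whereas you take $A$ with i.i.d.\@ $\mathcal{N}(0,1/m)$ entries. Your choice buys a substantially simpler concentration argument: by 2-stability and independence of the rows, $m\|Av\|_2^2$ is exactly $\chi^2_m$ for unit $v$, and the tail bound falls out of the explicit moment generating function; the orthogonal-projection construction requires a comparison argument (the projected norm is not a sum of independent terms) and historically gave weaker constants until later refinements. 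The paper itself makes this point in \cref{sec:intro:evolution-jl}, crediting the Gaussian construction to \citep{Har-Peled:2012:ANNTRtCoD} and noting that the proof of \citep{Dasgupta:2002:aEPoaToJaL} adapts easily to it precisely because the image vector has i.i.d.\@ Gaussian coordinates. The only price of your route is that it proves \cref{thm:jl} via a different JLD than the one in the lemma's original source, which is immaterial here since the statement only asserts existence of $f$. One small wording quibble: the independence of the coordinates $(Av)_i$ across $i$ comes from the independence of the rows of $A$, not from rotational invariance, which is what gives each coordinate its $\mathcal{N}(0,1/m)$ law.
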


\begin{proof}
  The gist of the proof is to first define
  $f(x) \eqdef (d/m)^{1/2}Ax$, where $A \in \R^{m \times d}$ are the
  first $m$ rows of a random orthogonal matrix. They then showed that
  $f$ preserves the norm of any vector with high probability, or more
  formally that the distribution of $f$ satisfies the following lemma.

  \begin{lemma}[Distributional Johnson--Lindenstrauss lemma~\citep{Johnson:1984:EoLMiaHS}]
    \label{thm:distjl}
    For every $d \in \Npos$ and $\eps, \delta \in (0, 1)$, there
    exists a probability distribution $\mathcal{F}$ over linear
    functions $f \colon \R^d \to \R^m$, where
    $m = \Theta(\eps^{-2}\log\frac{1}{\delta})$ such that for every
    $x \in \R^d$,
    \begin{equation}
      \label{eq:distjl}
      \Pr_{f \sim \mathcal{F}}\Bigl[ \, \bigl| \,\|f(x)\|_2^2 - \|x\|_2^2\, \bigr| \leq \eps \|x\|_2^2 \,\Bigr] \geq 1 - \delta .
    \end{equation}
  \end{lemma}

  By choosing $\delta = 1/|X|^2$, we can union bound over all pairs of
  vectors $x, y \in X$ and show that their distance (i.e.\@ the
  $\ell_2$ norm of the vector $x - y$) is preserved simultaneously for
  all pairs with probability at least
  $1 - \binom{|X|}{2}/|X|^2 > 1/2$.
\end{proof}

We will use the term Johnson--Lindenstrauss distribution (JLD) to
refer to a distribution $\mathcal{F}$ that is a witness to
\cref{thm:distjl}, and the term Johnson--Lindenstrauss transform (JLT)
to a function $f$ witnessing \cref{thm:jl}, e.g.\@ a sample of a JLD.

A few things to note about these lemmas are that when sampling a JLT
from a JLD it is independent of the input vectors themselves; the JLT
is only dependendent on the source dimension $d$, number of vectors
$|X|$, and distortion $\eps$. This allows us to sample a JLT without
having access to the input data, e.g.\@ to compute the JLT before the
data exists, or to compute the JLT in settings where the data is too
large to store on or move to a single machine\footnote{Note that the
  sampled transform only satisfies \cref{thm:jl} with some (high)
  probability. In the setting where we have access to the data, we can
  avoid this by resampling the transform until it satisfies
  \cref{thm:jl} for our specific dataset.}. Secondly, the target
dimension $m$ is independent from the source dimension $d$, meaning
there are potentially very significant savings in terms of
dimensionality, which will become more apparent shortly.

Compared to PCA, the guarantees that JLTs give are different: PCA
finds an embedding with optimal average distortion of distances
between the original and the embedded vectors, i.e.\@
$A_{\mathsf{PCA}} = \argmin_{A \in \R^{m \times d}} \sum_{x \in
  X}\norm{A^\T A x - x}{2}^2$~\citep{Jolliffe:2002:PCA}, whereas a JLT
bounds the worst case distortion between the distances within the
original space and distances within the embedded space. As for
computing the transformations, a common\footnote{There are also other
  approaches that compute an approximate PCA more efficiently than
  this, e.g.~\citep{Rokhlin:2009:aRAfPCA, Anaraki:2014:MaCEPCAvVSRP}.}
way of performing PCA is done by computing the covariance matrix and
then performing eigenvalue decomposition, which results in a running
time\footnote{Here $\omega \lesssim 2.373$ is the exponent from the
  running time of squared matrix
  multiplication~\citep{Williams:2012:MMFtCW, LeGall:2014:PoTaFMM}.}
of $\bigO(\abs{X}d^2 + d^\omega)$~\citep{Demmel:2007:FLAiS}, compared
to $\Theta(\abs{X} d \log d)$ and\footnote{Here $\norm{X}{0}$ is the
  total number of nonzero entries in the set of vectors $X$, i.e.\@
  $\norm{X}{0} \eqdef \sum_{x \in X} \norm{x}{0}$ where
  $\norm{x}{0} \eqdef \abs{\{i \mid x_i \neq 0\}}$ for a vector $x$.}
$\Theta(\norm{X}{0} \eps^{-1} \log \abs{X})$ that can be achieved by
the JLDs ``FJLT'' and ``Block SparseJL'', respectively, which will be
introduced in \cref{sec:intro:evolution-jl}. As such, PCA and JLDs are
different tools appropriate for different scenarios (see e.g.\@
\citep{Breger:2019:otRAoMCMRIwOP} where the two techniques are
compared empirically in the domain of medicinal imaging; see
also~\citep{Dasgupta:2000:EwRP, Bingham:2001:RPiDRAtIaTD,
  Fern:2003:RPfHDDCaCEA, Fradkin:2003:EwRPfML, Tang:2005:CDRTfDC,
  Deegalla:2006:RHDDbPCAvRPfNNC, Arpit:2014:aAoRPiCB,
  Wojnowicz:2016:PBtRhtRtDoVLDiaWtORP,
  Breger:2020:oOPfDRaAiATLFfLP}). That is not to say the two are
mutually exclusive, as one could apply JL to quickly shave off some
dimensions followed by PCA to more carefully reduce the remaining
dimensions~\citep[e.g.\@][]{Rokhlin:2009:aRAfPCA,
  Halko:2011:FSwRPAfCAMD, Xie:2016:CaDRTBoRPfCC,
  Yang:2020:htRDwPCAaRP}. For more on PCA, we refer the interested
reader to \citep{Jolliffe:2002:PCA}, which provides an excellent in
depth treatment of the topic.

One natural question to ask with respect to JLDs and JLTs is if the
target dimension is optimal. This is indeed the case as
\citet{Kane:2011:AOEJLF,Jayram:2013:OBfJLTaSPwSCE} independently give
a matching lower bound of
$m = \Omega(\eps^{-2} \log \frac{1}{\delta})$ for any JLD that
satisfies \cref{thm:distjl}, and \citet{Larsen:2017:OotJLL} showed
that the bound in \cref{thm:jl} is optimal up constant factors for
almost the entire range of $\eps$ with the following theorem.
\begin{theorem}[\citep{Larsen:2017:OotJLL}]
  For any integers $n, d \geq 2$ and
  $\lg^{0.5001} n / \sqrt{\min \{d, n\}} < \eps < 1$ there exists a
  set of points $X \subset \R^d$ of size $n$ such that any function
  $f : X \to \R^m$ satisfying \cref{eq:jl} must have
  \begin{equation}
    m = \Omega\bigl(\eps^{-2} \log (\eps^2 n)\bigr).
  \end{equation}
\end{theorem}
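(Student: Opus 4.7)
The plan is to use an encoding (information-theoretic) lower bound argument in the spirit of Alon and of Larsen--Nelson. I would construct a family $\mathcal{X}$ of candidate point sets $X \subset \R^d$ of size $n$ such that (i) $\log\abs{\mathcal{X}} = \Omega(n\,\eps^{-2}\log(\eps^2 n))$, and (ii) any embedding $f : X \to \R^m$ satisfying \cref{eq:jl} can be converted into an $O(nm)$-bit string from which $X \in \mathcal{X}$ is uniquely recoverable. Comparing these two quantities then forces $m = \Omega(\eps^{-2}\log(\eps^2 n))$.

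For the hard instance I would take $X$ to consist of the origin $\zerovector$, the $N = \Theta(n)$ standard basis vectors $e_1,\dotsc,e_N$, and a random collection $\mathcal{Y}$ of $\Theta(n)$ sparse unit vectors of the form $y_S \eqdef \abs{S}^{-1/2}\sum_{i\in S} e_i$, where each $S \subset [N]$ has $\abs{S} = k = \Theta(\eps^{-2})$ and the $S$'s are drawn so that listing $\mathcal{Y}$ costs $\Omega(n\,\eps^{-2}\log(\eps^2 n))$ bits. The key geometric observation is that applying \cref{eq:jl} to the triangles $\{\zerovector, e_i, y_S\}$ forces $\abs{\dotprod{f(y_S)}{f(e_i)} - \indicator[i\in S]/\sqrt{k}} = O(\eps)$, so once the images $f(e_1),\dotsc,f(e_N)$ are recorded, each support $S$ can be read off $f(y_S)$ by thresholding. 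Decoding the original $X$ therefore reduces to approximately encoding the images, and the probabilistic construction ensures that, with positive probability over $\mathcal{Y}$, any $\eps$-embedding implicitly carries $\Omega(n\,\eps^{-2}\log(\eps^2 n))$ bits of structural information.

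The main obstacle, and the reason this theorem lies strictly beyond Alon's earlier $\Omega(\eps^{-2}\log n /\log(1/\eps))$ lower bound, is to show that the embedding can be encoded in only $O(nm)$ bits rather than $O(nm\log(1/\eps))$ bits, i.e.\@ at constantly many bits per coordinate on average. A direct quantization of each coordinate of $f(x)$ to distortion $O(\eps)$ would use $\Theta(\log(1/\eps))$ bits per coordinate, which is precisely what costs the spurious $\log(1/\eps)$ factor. The fix I would aim for is a non-uniform rounding scheme that stores only a few ``heavy'' coordinates of each $f(y_S)$ at full precision and exploits averaging together with the sparsity of $y_S$ to bound the cumulative error from the remaining coordinates. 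Pushing this through rigorously while verifying that the probabilistic hard instance admits simultaneous decoding for \emph{every} $\eps$-embedding is the technical crux. The range restriction $\lg^{0.5001} n / \sqrt{\min\{d,n\}} < \eps < 1$ in the statement is exactly the regime where this tight encoding succeeds; below it one enters the trivial regime $m \geq \min\{d,n\}$ and a different analysis is needed.
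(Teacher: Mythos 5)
The paper states this theorem only as a cited result of Larsen and Nelson and contains no proof of it, so your plan can only be measured against the argument in the cited work. At the level of strategy you have reconstructed that argument correctly: a counting/encoding lower bound with a hard instance consisting of the origin, the near-orthonormal $e_1,\dotsc,e_N$, and $\Theta(n)$ sparse combinations $y_S$ with $\abs{S}=k=\Theta(\eps^{-2})$, where \cref{eq:jl} applied to the triangles $\{\zerovector,e_i,y_S\}$ pins each $\dotprod{f(y_S)}{f(e_i)}$ to within $O(\eps)$ of its true value ($k^{-1/2}$ if $i\in S$, else $0$), so that $S$ is recoverable by thresholding provided $k$ is a sufficiently \emph{small} constant times $\eps^{-2}$ --- a constant you should not gloss over, since the signal $k^{-1/2}$ and the noise $O(\eps)$ are of the same order.

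The genuine gap is the step you explicitly defer: encoding the embedding in $O(nm)$ bits instead of $O(nm\log\frac{1}{\eps})$. This is the entire content of the improvement over Alon's earlier bound, and the fix you sketch --- storing a few ``heavy'' coordinates of each $f(y_S)$ at full precision and controlling the rest by averaging --- is not the mechanism that works, nor is it clear it can be made to work: the decoder needs \emph{all} $N$ inner products $\dotprod{f(y_S)}{f(e_i)}$ to additive accuracy $c\eps$ simultaneously, and the aggregate contribution of the discarded coordinates to $N$ different linear functionals is not controlled by an averaging argument. The actual resolution is a packing lemma: round each image to the lattice $(c\eps)\mathbb{Z}^m$ inside a ball of radius $O(1)$; by Cauchy--Schwarz an $\ell_2$-perturbation of norm $O(\eps)$ moves every inner product with a unit-norm $f(e_i)$ by only $O(\eps)$, and the number of such lattice points is $2^{O(m)}$ rather than $(1/\eps)^{O(m)}$ precisely because one may assume $m\geq\eps^{-2}$, so the ball has radius $O(\sqrt{m})$ in units of the lattice spacing and a volume comparison bounds the count by $e^{O(m)}$. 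Without this lemma your argument delivers only $m=\Omega\bigl(\eps^{-2}\log(\eps^2 n)/\log\frac{1}{\eps}\bigr)$. You would also need to specify how $f(e_1),\dotsc,f(e_N)$ themselves are stored (the same lemma applies) and check that the entropy count $\Omega(nk\log(N/k))$ survives passing to an unordered collection of supports.
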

Note that if $\eps \leq \sqrt{\lg n / \min \{d, n\}}$ then
$\eps^{-2} \lg n \geq \min\{d, n\}$, and embedding $X$ into dimension
$\min \{d, |X|\}$ can be done isometrically by the identity function
or by projecting onto $\spn(X)$, respectively.

\citet{Alon:2017:OCoAIPaDR} extended the result in
\citep{Larsen:2017:OotJLL} by providing a lower bound for the gap in
the range of $\eps$.
\begin{theorem}[\citep{Alon:2017:OCoAIPaDR}]
  There exists an absolute positive constant $0 < c < 1$ so that for any
  $n \geq d > cd \geq m$ and for all $\eps \geq 2/\sqrt{n}$, there
  exists a set of points $X \subset \R^d$ of size $n$ such that any
  function $f : X \to \R^m$ satisfying \cref{eq:jl} must have
  \begin{equation}
    m = \Omega\bigl(\eps^{-2} \log (2 + \eps^2 n) \bigr).
  \end{equation}
\end{theorem}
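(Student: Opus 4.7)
The plan is to prove this lower bound by an encoding-and-counting argument that extends \citet{Larsen:2017:OotJLL} down to the small-distortion regime $\eps \geq 2/\sqrt{n}$. Here the $\log(\eps^2 n)$ factor degenerates to a constant and the earlier argument no longer delivers a matching bound, so a sharper analysis is needed. The overall strategy is: exhibit a rich family $\mathcal{P}$ of ``hard'' $n$-point subsets of $\R^d$, show that any distortion-$\eps$ JL embedding must distinguish the members of $\mathcal{P}$, and then bound the number of bits needed to encode such an embedding, thereby lower-bounding $m$.

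First I would construct $\mathcal{P}$ using an error-correcting code. Start from a base configuration of $n$ nearly orthogonal unit vectors in $\R^d$ (which is feasible since $d \geq m/c$ is comfortably large when $m$ is large), and perturb the inner products by signed shifts of magnitude $\Theta(\eps)$ drawn from the code. A Gilbert--Varshamov-type argument should yield $\log|\mathcal{P}| = \Omega(n \eps^{-2} \log(2+\eps^2 n))$ with the guarantee that for any two distinct $X, X' \in \mathcal{P}$ an $\Omega(1)$-fraction of the pairwise distances differ by more than $3\eps$, so no single embedding can serve both at distortion $\eps$.

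Next, I would encode a candidate embedding $f : X \to \R^m$ by translating $f(0)$ to the origin, rotating $f(X)$ into a canonical orientation, and quantising the resulting $m \times n$ matrix of image coordinates onto a grid of resolution $\Theta(\eps)$. A naive count gives $O(m n \log(1/\eps))$ bits; pairing this with $\log|\mathcal{P}|$ already yields a lower bound on $m$ that is weaker than claimed by exactly a factor of $\log(1/\eps)$.

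The heart of the argument, and the principal obstacle, is to sharpen the encoding to shave off this $\log(1/\eps)$ factor. The image configuration $f(X)$ lies not in an arbitrary region of $(\R^m)^n$ but in a thin sliver determined by the approximately preserved pairwise distances, whose volume is far smaller than $\eps^{mn}$. Quantifying this rigorously calls for a metric-entropy estimate on the set of $m \times n$ matrices whose column Gram matrix is close to that of the input. Tools from asymptotic convex geometry---Dudley's entropy integral, concentration on the orthogonal group, or explicit volume bounds on matrix slabs---should reduce the encoding cost to $O(m n + m \log(1/\eps))$ bits and deliver the claimed $m = \Omega(\eps^{-2} \log(2+\eps^2 n))$. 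Making these covering bounds uniform across the full range $\eps \in [2/\sqrt{n}, 1]$ is the technical crux, since the geometry of near-orthonormal frames behaves quite differently at the two extremes.
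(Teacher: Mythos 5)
The first thing to note is that the paper does not prove this theorem at all: it is quoted verbatim from \citet{Alon:2017:OCoAIPaDR} as a black-box citation, so there is no in-paper proof to compare against. Judged on its own, your proposal correctly identifies the general strategy that Alon and Klartag actually use --- a counting/encoding argument over a large, well-separated family of point configurations built from codes, combined with a bound on the number of distinguishable $\eps$-distortion embeddings --- so the high-level plan is sound. But as written it is a plan, not a proof, and the two places where you defer the work are precisely where the entire content of the theorem lives.

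Concretely: (1) Your construction of $\mathcal{P}$ perturbs the inner products of ``$n$ nearly orthogonal unit vectors in $\R^d$'' by $\Theta(\eps)$ shifts, but the theorem lives in the regime $n \geq d$, where $n$ unit vectors in $\R^d$ cannot be nearly orthogonal in the naive sense, and more importantly a perturbed Gram matrix is only realisable as a point set in $\R^d$ if it remains positive semidefinite of rank at most $d$. Verifying that a Gilbert--Varshamov-sized family of such perturbations survives this rank/PSD constraint while retaining $\log|\mathcal{P}| = \Omega(n \eps^{-2} \log(2 + \eps^2 n))$ is a genuine piece of work that you do not address. (2) You correctly observe that naive quantisation of $f(X)$ costs $O(mn\log(1/\eps))$ bits and hence loses a $\log(1/\eps)$ factor relative to the claimed bound; removing that factor is exactly the improvement of \citet{Alon:2017:OCoAIPaDR} over the earlier counting arguments, and your paragraph asserting that Dudley-type metric entropy estimates ``should'' reduce the cost to $O(mn + m\log(1/\eps))$ bits is an acknowledgement of the gap rather than a closure of it. Until that covering-number estimate is actually carried out uniformly over $\eps \in [2/\sqrt{n}, 1]$, the argument only recovers a bound weaker than the statement by a $\log(1/\eps)$ factor.
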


It is, however, possible to circumvent these lower bounds by
restricting the set of input vectors we apply the JLTs to. For
instance, \citet{Klartag:2005:EPaRP, Dirksen:2016:DRwSMaUT,
  Bourgain:2015:TaUToSDRiES} provide target dimension upper bounds for
JLTs that are dependent on statistical properties of the input set
$X$. Similarly, JLTs can be used to approximately preserve pairwise
distances simultaneously for an entire subspace using
$m = \Theta(\eps^{-2} t \log (t/\eps))$, where $t$ denotes the
dimension of the subspace~\citep{Sarlos:2006:IAAfLMvRP}, which is a
great improvement when $t \ll |X|, d$.

Another useful property of JLTs is that they approximately preserve
dot products. \Cref{thm:jl:dotprod} formalises this property in terms
of \cref{thm:jl}, though it is
sometimes~\citep{Sarlos:2006:IAAfLMvRP, Arriaga:2006:aAToLRCaRP} stated
in terms of \cref{thm:distjl}. \Cref{thm:jl:dotprod} has a few extra
requirements on $f$ and $X$ compared to \cref{thm:jl}, but these are
not an issue if the JLT is sampled from a JLD, or if we add the
negations of all our vectors to $X$, which only slightly increases the
target dimension.

\begin{corollary}
  \label{thm:jl:dotprod}
  Let $d, \eps, X$ and $f$ be as defined in \cref{thm:jl}, and
  furthermore let $f$ be linear. Then for every $x, y \in X$, if
  $-y \in X$ then
  \begin{equation}
    \label{eq:jl:dotprod}
    | \dotprod{f(x)}{f(y)} - \dotprod{x}{y} | \leq \eps \|x\|_2 \|y\|_2.
  \end{equation}
\end{corollary}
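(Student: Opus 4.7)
The plan is to reduce the dot-product bound to two applications of \cref{thm:jl} via the polarization identity $4\dotprod{u}{v} = \norm{u+v}{2}^2 - \norm{u-v}{2}^2$. Applied to both $\dotprod{f(x)}{f(y)}$ and $\dotprod{x}{y}$ and subtracted, it yields
\begin{equation*}
4\bigl(\dotprod{f(x)}{f(y)} - \dotprod{x}{y}\bigr) = \bigl(\norm{f(x)+f(y)}{2}^2 - \norm{x+y}{2}^2\bigr) - \bigl(\norm{f(x)-f(y)}{2}^2 - \norm{x-y}{2}^2\bigr).
\end{equation*}
Linearity of $f$ rewrites $f(x) - f(y) = f(x-y)$ and $f(x) + f(y) = f(x) - f(-y) = f(x-(-y))$, so each bracket is exactly the kind of expression that \cref{eq:jl} controls.

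Next I would invoke \cref{thm:jl} on the pair $(x,y) \in X \times X$ to bound the second bracket by $\eps \norm{x-y}{2}^2$ in absolute value, and on $(x,-y)$---which is precisely where the hypothesis $-y \in X$ enters---to bound the first by $\eps \norm{x+y}{2}^2$. Combining via the triangle inequality and the parallelogram law $\norm{x+y}{2}^2 + \norm{x-y}{2}^2 = 2(\norm{x}{2}^2 + \norm{y}{2}^2)$ gives the intermediate estimate $|\dotprod{f(x)}{f(y)} - \dotprod{x}{y}| \leq \tfrac{\eps}{2}(\norm{x}{2}^2 + \norm{y}{2}^2)$.

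The main obstacle is that this intermediate bound is strictly weaker than the claimed $\eps \norm{x}{2} \norm{y}{2}$ by AM--GM, with equality only when $\norm{x}{2} = \norm{y}{2}$. To close the gap I would reduce to the equal-norm case using linearity: setting $\alpha = (\norm{y}{2} / \norm{x}{2})^{1/2}$ and working with $(\alpha x, \alpha^{-1} y)$ in place of $(x,y)$ leaves both $\dotprod{f(x)}{f(y)}$ and $\dotprod{x}{y}$ unchanged while equalising norms at $\sqrt{\norm{x}{2} \norm{y}{2}}$, and the polarization argument then yields exactly $\eps \norm{x}{2} \norm{y}{2}$. The only subtlety is that \cref{thm:jl} must apply to the rescaled pair $(\alpha x, \pm \alpha^{-1} y)$, which is automatic whenever $f$ is drawn from a JLD (the use case the paper singles out in the paragraph preceding the corollary) but would otherwise have to be added as an explicit closure assumption on $X$.
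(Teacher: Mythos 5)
Your proposal is correct and follows essentially the same route as the paper: the polarisation identity combined with the JL guarantee on the pairs $(x,y)$ and $(x,-y)$, the parallelogram law to obtain the equal-norm case, and a rescaling of $x$ and $y$ to a common norm (the paper normalises both to unit vectors and multiplies the result back by $\|x\|_2\|y\|_2$, which is the same device as your $\alpha$-rescaling). The closure subtlety you flag --- that \cref{thm:jl} must also apply to the rescaled pair --- is implicitly present in the paper's proof as well, and is handled only by the remark preceding the corollary that the extra requirements are not an issue when $f$ is sampled from a JLD.
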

\begin{proof}
  If at least one of $x$ and $y$ is the \zerovector-vector, then
  \cref{eq:jl:dotprod} is trivially satisfied as $f$ is linear. If $x$
  and $y$ are both unit vectors then we assume w.l.o.g.\@ that
  $\norm{x + y}{2} \geq \norm{x - y}{2}$ and we proceed as follows,
  utilising the polarisation identity:
  $4 \dotprod{u}{v} = \|u + v\|_2^2 - \|u - v\|_2^2$.
  \begin{align*}
    4 \bigl| \dotprod{f(x)}{f(y)} - \dotprod{x}{y} \bigr|
    &= \bigl| \|f(x) + f(y)\|_2^2 - \|f(x) - f(y)\|_2^2 - 4 \dotprod{x}{y} \bigr| \\
    &\leq \bigl| (1 + \eps)\|x + y\|_2^2 - (1 - \eps)\|x - y\|_2^2 - 4 \dotprod{x}{y} \bigr| \\
    &= \bigl| 4 \dotprod{x}{y} + \eps(\|x + y\|_2^2 + \|x - y\|_2^2) - 4 \dotprod{x}{y} \bigr| \\
    &= \eps(2\|x\|_2^2 + 2\|y\|_2^2) \\
    &= 4 \eps .
  \end{align*}

  Otherwise we can reduce to the unit vector case.
  \begin{align*}
    | \dotprod{f(x)}{f(y)} - \dotprod{x}{y} |
    &= \biggl| \dotprod[\Big]{f\bigl(\frac{x}{\|x\|_2}\bigr)}{f\bigl(\frac{y}{\|y\|_2}\bigr)} - \dotprod[\Big]{\frac{x}{\|x\|_2}}{\frac{y}{\|y\|_2}} \biggr| \|x\|_2\|y\|_2 \\
    &\leq \eps \|x\|_2\|y\|_2 .
  \end{align*}
\end{proof}

\sectionbreak

Before giving an overview of the development of JLDs in
\cref{sec:intro:evolution-jl}, let us return to our scenario and
example in \cref{sec:intro:problem} and show the wide variety of
fields where dimensionality reduction via JLTs have found use.
Furthermore, to make us more familiar with \cref{thm:jl} and its
related concepts, we will pick a few examples of how the lemma is
used.

\section{The Use(fulness) of Johnson--Lindenstrauss}
\label{sec:intro:jl-use-cases}

JLDs and JLTs have found uses and parallels in many fields and tasks,
some of which we will list below. Note that there are some overlap
between the following categories, as e.g.\@
\citep{Fern:2003:RPfHDDCaCEA} uses a JLD for an \emph{ensemble} of
weak learners to learn a mixture of Gaussians \emph{clustering}, and
\citep{Pilanci:2015:RSoCPwSG} solves \emph{convex optimisation}
problems in a way that gives \emph{differential privacy} guarantees.

\begin{description}

\item[Nearest-neighbour search] have benefited from the
  Johnson--Lindenstrauss lemmas on multiple occasions, including
  \citep{Kleinberg:1997:TAfNNSiHD, Kushilevitz:2000:ESfANNiHDS}, which
  used JL to randomly partition space rather than reduce the
  dimension, while others~\citep{Ailon:2009:tFJLTaANN,
    Har-Peled:2012:ANNTRtCoD} used the dimensionality reduction
  properties of JL more directly. Variations on these results include
  consructing locality sensitive hashing
  schemes~\citep{Datar:2004:LSHSBopSD} and finding nearest neighbours
  without false negatives~\citep{Sankowski:2017:ANNSwFNflfc}.

\item[Clustering] with results in various sub-areas such as mixture of
  Gaussians~\citep{Dasgupta:1999:LMoG, Fern:2003:RPfHDDCaCEA,
    Urruty:2007:CbRP}, subspace clustering~\citep{Heckel:2017:DRSC},
  graph clustering~\citep{Sakai:2009:FSCwRPaS, Guo:2020:RScCfLSDN},
  self-organising maps~\citep{Ritter:1989:SOSM,
    Kaski:1998:DRbRMFSCfC}, and
  $k$-means~\citep{Becchetti:2019:ODRfkMBSatJLL,
    Cohen:2015:DRfkMCaLRA, Boutsidis:2014:RDRfkMC,
    Liu:2017:FCSCaCEwRP, Sieranoja:2018:RPfkMC}, which will be
  explained in more detail in~\cref{sec:intro:clustering}.

\item[Outlier detection] where there have been works for various
  settings of outliers, including approximate
  nearest-neighbours~\citep{deVries:2010:FLAiVHDS,
    Schubert:2015:FaSODwANNE} and Gaussian
  vectors~\citep{Navarro-Esteban:2020:HDODuRP}, while
  \citep{Zhao:2020:SUODtSUOD} uses JL as a preprocessor for a range of
  outlier detection algorithms in a distributed computational model,
  and \citep{Aouf:2012:ADODuRSP} evaluates the use of JLTs for outlier
  detection of text documents.

\item[Ensemble learning] where independent JLTs can be used to
  generate training sets for weak learners for
  bagging~\citep{Schclar:2009:RPEC} and with the voting among the
  learners weighted by how well a given JLT projects the
  data~\citep{Cannings:2017:RPEC, Cannings:2020:RPDPfCP}. The
  combination of JLTs with multiple learners have also found use in
  the regime of learning high-dimensional distributions from few
  datapoints (i.e.\@
  $\abs{X} \ll d$)~\citep{Durrant:2013:RPaRLaLDEfFOtD,
    Zhang:2019:EwRPELvQD, Niyazi:2020:AAoaEoRPLD}.

\item[Adversarial machine learning] where Johnson--Lindenstrauss can
  both be used to defend against adversarial
  input~\citep{Nguyen:2016:TRMuRP, Weerasinghe:2019:SVMRaTDIA,
    Taran:2019:DaAAbRD} as well as help craft such
  attacks~\citep{Li:2020:PaPDBBA}.

\item[Miscellaneous machine learning] where, in addition to the more
  specific machine learning topics mentioned above,
  Johnson--Lindenstrauss has been used together with support vector
  machines~\citep{Calderbank:2009:CLUSDRaLitMD, Paul:2014:RPfLSVM,
    Lei:2020:ISRHTfLSVM}, Fisher's linear
  discriminant~\citep{Durrant:2010:CFLDACoRPD}, and neural
  networks~\citep{Schmidt:2018:SRPLGPDRfDL}, while
  \citep{Kim:2020:ARPGwVR} uses JL to facilitate stochastic gradient
  descent in a distributed setting.

\item[Numerical linear algebra] with work focusing on low rank
  approximation~\citep{Cohen:2015:DRfkMCaLRA, Musco:2020:PCPSPSaC},
  canonical correlation analysis~\citep{Avron:2014:EDRfCCA}, and
  regression in a local~\citep{Thanei:2017:RPfLSR, Maillard:2009:CLSR,
    Kaban:2014:NBoCLLSR, Slawski:2017:CLSQR} and a
  distributed~\citep{Heinze:2016:DLDSEuRP} computational
  model. Futhermore, as many of these subfields are related some
  papers tackle multiple numerical linear algebra problems at once,
  e.g.\@ low rank approximation, regression, and approximate matrix
  multiplication~\citep{Sarlos:2006:IAAfLMvRP}, and a line of
  work~\citep{Meng:2013:LDSEiISTaAtRLR, Clarkson:2017:LRAaRiIST,
    Nelson:2013:OFNLAAvSSE} have used JLDs to perform subspace
  embeddings which in turn gives algorithms for $\ell_p$ regression,
  low rank approximation, and leverage scores.

  For further reading, there are surveys~\citep{Mahoney:2011:RAfMaD,
    Halko:2011:FSwRPAfCAMD, Woodruff:2014:SaaTfNLA} covering much of
  JLDs' use in numerical linear algebra.

\item[Convex optimisation] in which Johnson--Lindenstrauss has been
  used for (integer) linear programming~\citep{Vu:2015:UtJLLiLaIP} and
  to improve a cutting plane method for finding a point in a convex
  set using a separation oracle~\citep{TatLee:2015:aFCPMaiIfCaCO,
    Jiang:2020:aICPMfCOCCGaiA}. Additionally,
  \citep{Zhang:2013:RtOSbDRP} studies how to recover a
  high-dimensional optimisation solution from a JL dimensionality
  reduced one.

\item[Differential privacy] have utilised Johnson--Lindenstrauss to
  provide sanitised solutions to the linear algebra problems of
  variance estimation~\citep{Blocki:2012:tJLTIPDP},
  regression~\citep{Sheffet:2019:DPOLS,
    Showkatbakhsh:2019:PUTOoLRuRPaAN, Zhou:2009:CaPSSR}, Euclidean
  distance estimation~\citep{Kenthapadi:2013:PvtJLT,
    Liu:2006:RPBMDPfPPDDM, Giannella:2013:BEDPDPufKI,
    Turgay:2008:DRoDPDT, Xu:2017:DPProDPHDDRvRP}, and low-rank
  factorisation~\citep{Upadhyay:2018:tPoPfLRF}, as well as convex
  optimisation~\citep{Pilanci:2015:RSoCPwSG,
    Kasiviswanathan:2016:EPERMfHDL}, collaborative
  filtering~\citep{Yang:2017:PPCFvtJLT} and solutions to graph-cut
  queries~\citep{Blocki:2012:tJLTIPDP,
    Upadhyay:2013:RPGSaDP}. Furthermore,
  \citep{Upadhyay:2015:REFJLTwAiDPaCS} analysis various JLDs with
  respect to differential privacy and introduces a novel one designed
  for this purpose.

\item[Neuroscience] where it is used as a tool to process data in
  computational neuroscience~\citep{Ganguli:2012:CSSaDiNIPaDA,
    Advani:2013:SMoCNSaHDD}, but also as a way of modelling
  neurological processes~\citep{Ganguli:2012:CSSaDiNIPaDA,
    Advani:2013:SMoCNSaHDD, Allen-Zhu:2014:SSCJLMCwNBC,
    Petrantonakis:2014:aCSPoHF}. Interestingly, there is some
  evidence~\citep{Murthy:2008:TORSitDMB, Stettler:2009:RoOitPC,
    Caron:2013:RCoOIitDMB} to suggest that JL-like operations occur in
  nature, as a large set of olifactory sensory inputs (projection
  neurons) map onto a smaller set of neurons (Kenyon cells) in the
  brains of fruit flies, where each Kenyon cell is connected to a
  small and seemingly random subset of the projection neurons. This is
  reminiscent of sparse JL constructions, which will be introduced in
  \cref{sec:intro:sparse-jl}, though I am not neuroscientifically
  adept enough to judge how far these similarities between biological
  constructs and randomised linear algebra extend.

\item[Other topics] where Johnson--Lindenstrauss have found use
  include graph sparsification~\citep{Spielman:2011:GSbER}, graph
  embeddings in Euclidean spaces~\citep{Frankl:1988:tJLLatSosG},
  integrated circuit design~\citep{Vempala:1998:RPanAtVLSI}, biometric
  authentication~\citep{Arpit:2014:aAoRPiCB}, and approximating
  minimum spanning trees~\citep{Har-Peled:2000:WCCAtMST}.

  For further examples of Johnson--Lindenstrauss use cases, please
  see~\citep{Indyk:2001:AAoLDGE, Vempala:2004:tRPM}.
\end{description}

Now, let us dive deeper into the areas of clustering and streaming
algorithms to see how Johnson--Lindenstrauss can be used there.

\subsection{Clustering}
\label{sec:intro:clustering}

Clustering can be defined as partitioning a dataset such that elements
are similar to elements in the same partition while being dissimilar
to elements in other partitions. A classic clustering problem is the
so-called $k$-means clustering where the dataset $X \subset \R^d$
consists of points in Euclidean space. The task is to choose $k$
cluster centers $c_1, \dotsc, c_k$ such that they minimise the sum of
squared distances from datapoints to their nearest cluster center,
i.e.\@
\begin{equation}
  \label{eq:kmeans:def_argmin}
  \argmin_{c_1, \dotsc, c_k} \sum_{x \in X} \min_i \|x - c_i\|_2^2.
\end{equation}

This creates a Voronoi partition, as each datapoint is assigned to the
partition corresponding to its nearest cluster center. We let
$X_i \subseteq X$ denote the set of points that have $c_i$ as their
closest center. It is well known that for an optimal choice of
centers, the centers are the means of their corresponding partitions,
and furthermore, the cost of any choice of centers is never lower than
the sum of squared distances from datapoints to the mean of their
assigned partition, i.e.\@
\begin{equation}
  \sum_{x \in X} \min_i \|x - c_i\|_2^2
  \geq \sum_{i = 1}^{k} \sum_{x \in X_i} \Bigl\|x - \frac{1}{|X_i|}\sum_{y \in X_i} y \Bigr\|_2^2.
\end{equation}

It has been shown that finding the optimal centers, even for $k = 2$,
is NP-hard~\citep{Aloise:2009:NPHoESoSC,Dasgupta:2008:tHokMC}; however, various heuristic
approaches have found success such as the commonly used Lloyd's
algorithm~\citep{Lloyd:1982:LSQiPCM}. In Lloyd's algorithm, after
initialising the centers in some way we iteratively improve the choice
of centers by assigning each datapoint to its nearest center and then
updating the center to be the mean of the datapoints assigned to
it. These two steps can then be repeated until some termination
criterion is met, e.g.\@ when the centers have converged. If we let $t$
denote the number of iterations, then the running time becomes
$\bigO(t |X| k d)$, as we use $\bigO(|X| k d)$ time per iteration to
assign each data point to its nearest center. We can improve this
running time by quickly embedding the datapoints into a lower
dimensional space using a JLT and then running Lloyd's algorithm in
this smaller space. The Fast Johnson--Lindenstrauss Transform, which
we will introduce later, can for many sets of parameters embed a
vector in $\bigO(d \log d)$ time reducing the total running time to
$\bigO(|X| d \log d + t |X| k \eps^{-2} \log |X|)$. However, for this
to be useful we need the partitioning of points in the lower
dimensional space to correspond to an (almost) equally good partition
in the original higher dimensional space.

In order to prove such a result we will use the following lemma, which
shows that the cost of a partitioning, with its centers chosen as the
means of the partitions, can be written in terms of pairwise distances
between datapoints in the partitions.
\begin{lemma}
  Let $k, d \in \Npos$ and $X_i \subset \R^d$ for\footnote{We use $[k]$
    to denote the set $\{1, \dotsc, k\}$.} $i \in [k].$
  \label{thm:kmeans:cost_pwd}
  \begin{equation}
    \label{eq:kmeans:cost_pwd}
    \sum_{i = 1}^{k} \sum_{x \in X_i} \Bigl\|x - \frac{1}{|X_i|}\sum_{y \in X_i} y \Bigr\|_2^2
    = \frac{1}{2} \sum_{i = 1}^{k} \frac{1}{|X_i|} \sum_{x, y \in X_i}\|x - y\|_2^2.
  \end{equation}
\end{lemma}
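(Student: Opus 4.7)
The plan is to note that both sides split as a sum over partitions $i$, so it suffices to prove the within-partition identity: for any finite set $S \subset \R^d$ of size $n \ge 1$ with mean $\mu \eqdef \frac{1}{n}\sum_{y \in S} y$,
\begin{equation*}
  \sum_{x \in S} \|x - \mu\|_2^2 \;=\; \frac{1}{2n} \sum_{x, y \in S} \|x - y\|_2^2.
\end{equation*}
Applying this to each $X_i$ (with $\mu$ being its mean) and summing over $i \in [k]$ then yields \cref{eq:kmeans:cost_pwd}.

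To prove the single-partition identity I would expand both sides into inner products and check that they agree. On the left, $\|x-\mu\|_2^2 = \|x\|_2^2 - 2\dotprod{x}{\mu} + \|\mu\|_2^2$. Summing over $x \in S$ and using $\sum_{x \in S} x = n\mu$ gives
\begin{equation*}
  \sum_{x \in S} \|x - \mu\|_2^2 \;=\; \sum_{x \in S}\|x\|_2^2 \;-\; n\|\mu\|_2^2 .
\end{equation*}
On the right, expanding $\|x-y\|_2^2 = \|x\|_2^2 - 2\dotprod{x}{y} + \|y\|_2^2$ and summing over the $n^2$ ordered pairs $(x,y) \in S \times S$ gives
\begin{equation*}
  \sum_{x, y \in S} \|x - y\|_2^2 \;=\; 2n\sum_{x \in S} \|x\|_2^2 \;-\; 2\Bigl\|\sum_{x \in S} x\Bigr\|_2^2 \;=\; 2n\sum_{x \in S}\|x\|_2^2 \;-\; 2n^2 \|\mu\|_2^2 ,
\end{equation*}
and dividing by $2n$ produces the same expression as on the left.

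There is really no obstacle here beyond careful bookkeeping; the only thing to be a little careful about is that the inner double sum on the right of \cref{eq:kmeans:cost_pwd} ranges over ordered pairs (including $x = y$, which contribute zero), so the factor $1/(2n)$ matches the symmetric expansion above. If one prefers, the identity can alternatively be derived from the variance decomposition viewpoint, observing that $\mu$ minimises $c \mapsto \sum_{x \in S} \|x - c\|_2^2$ and substituting, but the direct inner-product expansion above is the cleanest path.
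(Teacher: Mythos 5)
Your proof is correct and follows essentially the same route as the paper: reduce to a single-partition identity and verify it by expanding squared norms into inner products and using $\sum_{x \in S} x = n\mu$. The only organisational difference is that the paper recenters the pairwise terms as $\|(x-\mu)-(y-\mu)\|_2^2$ and invokes a small auxiliary lemma stating that $\sum_{x,y}\dotprod{x-\mu}{y-\mu}=0$, whereas you compute both sides directly in terms of $\sum_x\|x\|_2^2$ and $\|\mu\|_2^2$; the underlying bookkeeping is the same.
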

The proof of \cref{thm:kmeans:cost_pwd} consists of various linear
algebra manipulations and can be found in
\cref{sec:app:k-means-cost}. Now we are ready to prove the
following proposition, which states that if we find a partitioning
whose cost is within $(1 + \gamma)$ of the optimal cost in low
dimensional space, that partitioning when moving back to the high
dimensional space is within $(1 + 4 \eps) (1 + \gamma)$ of the optimal cost
there.

\begin{proposition}
  \label{thm:kmeans:cost_jl}
  Let $k, d \in \Npos$, $X \subset \R^d$, $\eps \leq 1/2$,
  $m = \Theta(\eps^{-2} \log |X|)$, and $f : X \to \R^m$ be a JLT. Let
  $Y \subset \R^m$ be the embedding of $X$. Let $\kappa_m^*$ denote
  the optimal cost of a partitioning of $Y$, with respect to
  \cref{eq:kmeans:def_argmin}. Let $Y_1, \dotsc, Y_k \subseteq Y$ be a
  partitioning of $Y$ with cost $\kappa_m$ such that
  $\kappa_m \leq (1 + \gamma)\kappa_m^*$ for some $\gamma \in \R$. Let
  $\kappa_d^*$ be the cost of an optimal partitioning of $X$ and
  $\kappa_d$ be the cost of the partitioning
  $X_1, \dotsc, X_k \subseteq X$, satisfying
  $Y_i = \{f(x) \mid x \in X_i\}$. Then
  \begin{equation}
    \label{eq:kmeans:cost_jl}
    \kappa_d \leq (1 + 4 \eps)(1 + \gamma) \kappa_d^*.
  \end{equation}
\end{proposition}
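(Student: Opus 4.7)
The plan is to combine Lemma \ref{thm:kmeans:cost_pwd}, which rewrites the $k$-means cost of any partition (with centres at the means) as a weighted sum of squared pairwise distances within each part, with the fact that $f$ is a JLT and therefore approximately preserves all such pairwise distances. The core observation is this: for any partition $P = (X_1,\dotsc,X_k)$ of $X$ with image $f(P) = (Y_1,\dotsc,Y_k)$ in $Y$, applying the JL guarantee from \cref{eq:jl} term-by-term to the right-hand side of \cref{eq:kmeans:cost_pwd} yields the two-sided estimate
\begin{equation*}
(1 - \eps)\,\kappa_d(P) \;\leq\; \kappa_m(f(P)) \;\leq\; (1 + \eps)\,\kappa_d(P),
\end{equation*}
where $\kappa_d(\cdot)$ and $\kappa_m(\cdot)$ denote the cost of a partition in the respective spaces.

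With this sandwich in hand I would then chain four inequalities. First, apply the lower bound to the partition $(X_1,\dotsc,X_k)$ to get $\kappa_d \leq \kappa_m / (1-\eps)$. Second, use the hypothesis $\kappa_m \leq (1+\gamma)\kappa_m^*$. Third — this is the step that requires a small but crucial change of viewpoint — bound $\kappa_m^*$ in terms of $\kappa_d^*$: letting $P_d^*$ be an optimal partition of $X$, the image $f(P_d^*)$ is \emph{some} partition of $Y$, so by the optimality of $\kappa_m^*$ and the upper bound of the sandwich,
\begin{equation*}
\kappa_m^* \;\leq\; \kappa_m(f(P_d^*)) \;\leq\; (1+\eps)\,\kappa_d(P_d^*) \;=\; (1+\eps)\,\kappa_d^*.
\end{equation*}
Combining the three steps gives $\kappa_d \leq \frac{(1+\eps)(1+\gamma)}{1-\eps}\,\kappa_d^*$.

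Finally, I would absorb the $\frac{1+\eps}{1-\eps}$ factor into $1 + 4\eps$ using the assumption $\eps \leq 1/2$: since $1-\eps \geq 1/2$, we have $\frac{1+\eps}{1-\eps} = 1 + \frac{2\eps}{1-\eps} \leq 1 + 4\eps$, which yields \cref{eq:kmeans:cost_jl}.

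The main conceptual obstacle is the third step: one might be tempted to compare $\kappa_m^*$ to $\kappa_d^*$ via the \emph{optimal} partition of $Y$ pulled back to $X$, but that direction is useless here because the pullback need not be optimal in $\R^d$. The correct move is to push the optimal partition of $X$ \emph{forward} into $Y$ and then invoke the optimality of $\kappa_m^*$; after that, everything is routine. A minor technical check worth mentioning is that \cref{eq:jl} preserves distances up to a multiplicative factor of $(1 \pm \eps)$ on squared $\ell_2$ norms, which is exactly what Lemma \ref{thm:kmeans:cost_pwd} needs, so no further rescaling is required.
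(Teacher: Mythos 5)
Your proposal is correct and follows essentially the same route as the paper: rewrite the cost via Lemma \ref{thm:kmeans:cost_pwd} as pairwise squared distances so the JLT guarantee gives the sandwich, chain $\kappa_d \leq \kappa_m/(1-\eps) \leq (1+\gamma)\kappa_m^*/(1-\eps) \leq (1+\eps)(1+\gamma)\kappa_d^*/(1-\eps)$ using the push-forward of the optimal partition of $X$, and absorb the factor into $1+4\eps$ using $\eps \leq 1/2$. Your write-up merely makes explicit two points the paper leaves terse (the term-by-term application of \cref{eq:jl} and the forward-versus-pullback direction in bounding $\kappa_m^*$), and your algebraic absorption of $\frac{1+\eps}{1-\eps}$ is a trivially equivalent variant of the paper's.
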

\begin{proof}
  Due to \cref{thm:kmeans:cost_pwd} and the fact that $f$ is a JLT we
  know that the cost of our partitioning is approximately preserved
  when going back to the high dimensional space, i.e.\@
  $\kappa_d \leq \kappa_m/(1 - \eps)$. Furthermore, since the cost
  of $X$'s optimal partitioning when embedded down to $Y$ cannot be
  lower than the optimal cost of partitioning $Y$, we can conclude
  $\kappa_m^* \leq (1 + \eps) \kappa_d^*$. Since $\eps \leq 1/2$, we have
  $1/(1 - \eps) = 1 + \eps/(1 - \eps) \leq 1 + 2 \eps$ and also
  $(1 + \eps)(1 + 2 \eps) = (1 + 3 \eps + 2 \eps^2) \leq 1 + 4 \eps$.
  Combining these inequalities we get
  \begin{align*}
    \kappa_d &\leq \frac{1}{1 - \eps} \kappa_m \\
             &\leq (1 + 2 \eps) (1 + \gamma) \kappa_m^* \\
             &\leq (1 + 2 \eps) (1 + \gamma) (1 + \eps) \kappa_d^* \\
             &\leq (1 + 4 \eps) (1 + \gamma) \kappa_d^*.
  \end{align*}
\end{proof}

By pushing the constant inside the $\Theta$-notation,
\cref{thm:kmeans:cost_jl} shows that we can achieve a $(1 + \eps)$
approximation\footnote{Here the approximation ratio is between any
  $k$-means algorithm running on the high dimensional original data
  and on the low dimensional projected data.} of $k$-means with
$m = \Theta(\eps^{-2} \log \abs{X})$. However, by more carefully
analysing which properties are needed, we can improve upon this for
the case where $k \ll \abs{X}$. \citet{Boutsidis:2014:RDRfkMC} showed
that projecting down to a target dimension of
$m = \Theta(\eps^{-2} k)$ suffices for a slightly worse $k$-means
approximation factor of $(2 + \eps)$. This result was expanded upon in
two ways by \citet{Cohen:2015:DRfkMCaLRA}, who showed that projecting
down to $m = \Theta(\eps^{-2} k)$ achieves a $(1 + \eps)$
approximation ratio, while projecting all the way down to
$m = \Theta(\eps^{-2} \log k)$ still suffices for a $(9 + \eps)$
approximation ratio. The $(1 + \eps)$ case has recently been further
improved upon by both \citet{Becchetti:2019:ODRfkMBSatJLL}, who have
shown that one can achieve the $(1 + \eps)$ approximation ratio for
$k$-means when projecting down to
$m = \Theta\bigl(\eps^{-6} (\log k + \log \log |X|) \log
\eps^{-1}\bigr)$, and by \citet{Makarychev:2019:PoJLTfkMakMC}, who
independently have proven an even better bound of
$m = \Theta\bigl(\eps^{-2} \log k/\eps \bigr)$, essentially giving a
``best of worlds'' result with respect to
\citep{Cohen:2015:DRfkMCaLRA}.

For an overview of the history of $k$-means clustering, we refer the
interested reader to \citep{Bock:2008:OaEotkMAiCA}.

\subsection{Streaming}
\label{sec:intro:streaming}


The field of streaming algorithms is characterised by problems where
we receive a sequence (or stream) of items and are queried on the
items received so far. The main constraint is usually that we only
have limited access to the sequence, e.g.\@ that we are only allowed
one pass over it, and that we have very limited space, e.g.\@
polylogarithmic in the length of the stream. To make up for these
constraints we are allowed to give approximate answers to the
queries. The subclass of streaming problems we will look at here are
those where we are only allowed a single pass over the sequence and
the items are updates to a vector and a query is some statistic on
that vector, e.g.\@ the $\ell_2$ norm of the vector. More formally,
and to introduce the notation, let $d \in \Npos$ be the number of
different items and let $T \in \Npos$ be the length of the stream of
updates $(i_j, v_j) \in [d] \times \R$ for $j \in [T]$, and define the
vector $x$ at time $t$ as $x^{(t)} \eqdef \sum_{j=1}^{t} v_j
e_{i_j}$. A query $q$ at time $t$ is then a function of $x^{(t)}$, and
we will omit the $^{(t)}$ superscript when referring to the current
time.

There are a few common variations on this model with respect to the
updates. In the \emph{cash register} model or \emph{insertion only}
model $x$ is only incremented by bounded integers, i.e.\@
$v_j \in [M]$, for some $M \in \Npos$. In the \emph{turnstile} model, $x$
can only be incremented or decremented by bounded integers, i.e.\@
$v_j \in \{-M, \dotsc, M\}$ for some $M \in \Npos$, and the \emph{strict
  turnstile} model is as the turnstile model with the additional
constraint that the entries of $x$ are always non-negative,
i.e.\@ $x^{(t)}_i \geq 0$, for all $t \in [T]$ and $i \in [d]$.

As mentioned above, we are usually space constrained so that we cannot
explicitely store $x$ and the key idea to overcome this limitation is
to store a linear \emph{sketch} of $x$, that is storing
$y \eqdef f(x)$, where $f \colon \R^d \to \R^m$ is a linear function
and $m \ll d$, and then answering queries by applying some function on
$y$ rather than $x$. Note that since $f$ is linear, we can apply it to
each update individually and compute $y$ as the sum of the sketched
updates. Furthermore, we can aggregate results from different streams
by adding the different sketches, allowing us to distribute the
computation of the streaming algorithm.

The relevant Johnson--Lindenstrauss lemma in this setting is
\cref{thm:distjl} as with a JLD we get linearity and are able to
sample a JLT before seeing any data at the cost of introducing some
failure probability.

Based on JLDs, the most natural streaming problem to tackle is second
frequency moment estimation in the turnstile model, i.e.\@
approximating $\|x\|_2^2$, which has found use in database query
optimisation~\citep{Alon:2002:TJaSJSiLS, Walton:1991:aTaPMoDSEiPJ,
  DeWitt:1992:PSHiPJ} and network data
analysis~\citep{Gilbert:2001:QQSaAoND, Cormode:2005:SSTtNDAQT} among
other areas. Simply letting $f$ be a sample from a JLD and returning
$\|f(x)\|_2^2$ on queries, gives a factor $(1 \pm \eps)$ approximation
with failure probability $\delta$ using
$\bigO(\eps^{-2} \log \frac{1}{\delta} + \abs{f})$ words\footnote{Here
  we assume that a word is large enough to hold a sufficient
  approximation of any real number we use and to hold a number from
  the stream, i.e.\@ if $w$ denotes the number of bits in a word then
  $w = \Omega(\log d + \log M)$.} of space, where $\abs{f}$ denotes
the words of space needed to store and apply $f$. However, the
approach taken by the streaming literature is to estimate $\|x\|_2^2$
with constant error probability using $\bigO(\eps^{-2} + \abs{f})$
words of space, and then sampling $\bigO(\log \frac{1}{\delta})$ JLTs
$f_1, \dotsc, f_{\bigO(\log 1 / \delta)} \colon \R^d \to
\R^{\bigO(\eps^{-2})}$ and responding to a query with
$\median_k \|f_k(x)\|_2^2$, which reduces the error probability to
$\delta$. This allows for simpler analyses as well more efficient
embeddings (in the case of Count Sketch) compared to using a single
bigger JLT, but it comes at the cost of not embedding into $\ell_2$,
which is needed for some applications outside of streaming. With this
setup the task lies in constructing space efficient JLTs and a seminal
work here is the AMS Sketch a.k.a.\@ AGMS Sketch a.k.a.\@ Tug-of-War
Sketch~\citep{Alon:1999:tSCoAtFM, Alon:2002:TJaSJSiLS}, whose JLTs can
be defined as $f_i \eqdef m^{-1/2}Ax$, where
$A \in \{-1, 1\}^{m \times d}$ is a random matrix. The key idea is
that each row $r$ of $A$ can be backed by a hash function
$\sigma_r \colon [d] \to \{-1, 1\}$ that need only be 4-wise
independent, meaning that for any set of 4 distinct keys
$\{k_1, \dotsc k_4\} \subset [d]$ and 4 (not necessarily distinct)
values $v_1, \dotsc v_4 \in \{-1, 1\}$, the probability that the keys
hash to those values is
$\Pr_{\sigma_r}[\bigwedge_i \sigma_r(k_i) = v_i] = \abs[\big]{\{-1,
  1\}}^{-4}$. This can for instance\footnote{See
  e.g.~\citep{Thorup:2012:TB5IHwAtLPaSME} for other families of
  $k$-wise independent hash functions.} be attained by implementing
$\sigma_r$ as 3rd degree polynomial modulus a sufficiently large prime
with random coefficients~\citep{Wegman:1981:NHFaTUiAaSE}, and so such
a JLT need only use $\bigO(\eps^{-2})$ words of space. Embedding a
scaled standard unit vector with such a JLT takes $\bigO(\eps^{-2})$
time leading to an overall update time of the AMS Sketch of
$\bigO(\eps^{-2} \log \frac{1}{\delta})$.

A later improvement of the AMS Sketch is the so-called Fast-AGMS
Sketch~\citep{Cormode:2005:SSTtNDAQT} a.k.a.\@ Count
Sketch~\citep{Charikar:2004:FFIiDS, Thorup:2012:TB5IHwAtLPaSME}, which
sparsifies the JLTs such that each column in their matrix
representations only has one non-zero entry. Each JLT can be
represented by a pairwise independent hash function
$h \colon [d] \to [\bigO(\eps^{-2})]$ to choose the position of each
nonzero entry and a 4-wise independent hash function
$\sigma \colon [d] \to \{-1, 1\}$ to choose random signs as
before. This reduces the standard unit vector embedding time to
$\bigO(1)$ and so the overall update time becomes
$\bigO(\log \frac{1}{\delta})$ for Count Sketch. It should be noted
that the JLD inside Count Sketch is also known as Feature Hashing,
which we will return to in \cref{sec:intro:sparse-jl}.

Despite not embedding into $\ell_2$, due to the use of the non-linear
median, AMS Sketch and Count Sketch approximately preserve dot
products similarly to \cref{thm:jl:dotprod}~\citep[][Theorem 2.1 and
Theorem 3.5]{Cormode:2005:SSTtNDAQT}. This allows us to query for the
(approximate) frequency of any particular item as
\begin{equation*}
  \median_k \dotprod{f_k(x)}{f_k(e_i)} = \dotprod{x}{e_i} \pm \eps\|x\|_2\|e_i\|_2 = x_i \pm \eps\|x\|_2
\end{equation*}
with probability at least $1 - \delta$.

This can be extended to finding frequent items in an insertion only
stream~\citep{Charikar:2004:FFIiDS}. The idea is to use a slightly
larger\footnote{Rather than each JLT having a target dimension of
  $\bigO(\eps^{-2})$, the analysis needs the target dimension to be
  $\bigO\Bigl( \frac{\|\tail_k(x)\|_2^2}{(\eps x_{i_k})^2} \Bigr)$,
  where $\tail_k(x)$ denotes $x$ with its $k$ largest entries zeroed
  out.} Count Sketch instance to maintain a heap of the $k$
approximately most frequent items of the stream so far. That is, if we
let $i_k$ denote the $k$th most frequent item (i.e.\@
$\bigl|\{ j \mid x_j \geq x_{i_k} \}\bigr| = k$), then with
probability $1 - \delta$ we have $x_j > (1 - \eps) x_{i_k}$ for every
item $j$ in our heap.

For more on streaming algorithms, we refer the reader to
\citep{Muthukrishnan:2005:DSAaA} and \citep{Nelson:2011:SaSHDV}, which
also relates streaming to Johnson--Lindenstrauss.

\section{The Tapestry of Johnson--Lindenstrauss Transforms}
\label{sec:intro:evolution-jl}

\epigraph{Isti mirant stella}{---\textup{Scene 32}, The Bayeux Tapestry~\citep{Bayeux:1070:tBT}}

\noindent
As mentioned in \cref{sec:intro:jl_lemma}, the original JLD from
\citep{Johnson:1984:EoLMiaHS} is a distribution over functions
$f \colon \R^{d} \to \R^{m}$, where\footnote{We will usually omit the
  normalisation or scaling factor (the $(d/m)^{1/2}$ for this JLD)
  when discussing JLDs as they are textually noisy, not that
  interesting, and independent of randomness and input data.}
$f(x) = (d/m)^{1/2}Ax$ and $A$ is a random $m \times d$ matrix whose
rows form an orthonormal basis of some $m$-dimensional subspace of
$\R^d$, i.e.\@ the rows are unit vectors and pairwise
orthogonal. While \citet{Johnson:1984:EoLMiaHS} showed that
$m = \Theta(\eps^{-2} \log |X|)$ suffices to prove \cref{thm:jl}, they
did not give any bounds on the constant in the big-$\bigO$
expression. This was remedied in \citep{Frankl:1988:tJLLatSosG}, which
proved that
$m = \bigl\lceil 9(\eps^2 - 2\eps^3/3)^{-1} \ln |X| \bigr\rceil + 1$
suffices for the same JLD if $m < \sqrt{|X|}$. This bound was further
improved in \citep{Frankl:1990:SGAotBD} by removing the
$m < \sqrt{|X|}$ restriction and lowering the bound to
$m = \bigl\lceil 8(\eps^2 - 2\eps^3/3)^{-1} \ln |X| \bigr\rceil$.

The next thread of JL research worked on simplifying the JLD
constructions as
\citeauthor{Indyk:1998:ANNTRtCoD}~\citep{Har-Peled:2012:ANNTRtCoD}
showed that sampling each entry in the matrix i.i.d.\@ from a properly
scaled Gaussian distribution is a JLD. The rows of such a matrix do
not form a basis as they are with high probability not orthogonal;
however, the literature still refer to this and most other JLDs as
random projections. Shortly thereafter \citet{Arriaga:2006:aAToLRCaRP}
constructed a JLD by sampling i.i.d.\@ from a Rademacher\footnote{The
  Rademacher distribution is the uniform distribution on $\{-1, 1\}$.}
distribution, and \citet{Achlioptas:2003:DFRPJLwBC} sparsified the
Rademacher construction such that the entries $a_{ij}$ are sampled
i.i.d.\@ with $\Pr[a_{ij} = 0] = 2/3$ and
$\Pr[a_{ij} = -1] = \Pr[a_{ij} = 1] = 1/6$. We will refer to such
sparse i.i.d.\@ Rademacher constructions as Achlioptas
constructions. The Gaussian and Rademacher results have later been
generalised~\citep{Matousek:2008:oVotJLL, Indyk:2007:NNPE,
  Klartag:2005:EPaRP} to show that a JLD can be constructed by
sampling each entry in a $m \times d$ matrix i.i.d.\@ from any
distribution with mean 0, variance 1, and a subgaussian
tail\footnote{A real random variable $X$ with mean 0 has a subgaussian
  tail if there exists constants $\alpha, \beta > 0$ such that for all
  $\lambda > 0$,
  $\Pr\bigl[|X| > \lambda\bigr] \leq \beta e^{- \alpha
    \lambda^2}$.}. It should be noted that these developments have a
parallel in the streaming literature as the previously mentioned AMS
Sketch~\citep{Alon:1999:tSCoAtFM,Alon:2002:TJaSJSiLS} is identical to
the Rademacher construction~\citep{Arriaga:2006:aAToLRCaRP}, albeit
with constant error probability.

As for the target dimension for these constructions,
\citep{Har-Peled:2012:ANNTRtCoD} proved that the Gaussian construction
is a JLD if
$m \geq 8(\eps^2 - 2\eps^3/3)^{-1} \bigl(\ln |X| + \bigO(\log
m)\bigr)$, which roughly corresponds to an additional additive
$\bigO(\eps^{-2}\log \log |X|)$ term over the original
construction. This additive $\log\log$ term was shaved off by the
proof in \citep{Dasgupta:2002:aEPoaToJaL}, which concerns itself with
the original JLD construction but can easily\footnote{The main part of
  the proof in \citep{Dasgupta:2002:aEPoaToJaL} is showing that the
  $\ell_2$ norm of a vector of i.i.d.\@ Gaussians is concentrated
  around the expected value. A vector projected with the Gaussian
  construction is distributed as a vector of i.i.d.\@ Gaussians.} be
adapted to the Gaussian construction, and the proof in
\citep{Arriaga:2006:aAToLRCaRP}, which also give the same $\log\log$
free bound for the dense Rademacher
construction. \citet{Achlioptas:2003:DFRPJLwBC} showed that his
construction also achieves
$m = \bigl\lceil 8(\eps^2 - 2\eps^3/3)^{-1} \ln |X| \bigr\rceil$. The
constant of 8 has been improved for the Gaussian and dense Rademacher
constructions in the sense that
\citeauthor{Rojo:2010:ItJLL}~\citep{Rojo:2010:ItJLL,
  Nguyen:2009:DRMwAtJDDwaCR} have been able to replace the bound with
more intricate\footnote{For example, one of the bounds for the
  Rademacher construction is
  $m \geq \frac{2(d - 1) \alpha^2}{d \eps^2}$, where
  $\alpha \eqdef \frac{Q + \sqrt{Q^2 + 5.98}}{2}$,
  $Q \eqdef \Phi^{-1}(1 - 1/|X|^2)$, and $\Phi^{-1}$ is the quantile
  function of the standard Gaussian random variable.} expressions,
which yield a 10 to 40~\% improvement for many sets of
parameters. However, in the distributional setting it has been shown
in \citep{Burr:2018:OBfJLT} that
$m \geq 4 \eps^{-2} \ln \frac{1}{\delta} (1 - o(1))$ is necessary for
any JLD to satisfy \cref{thm:distjl}, which corresponds to a constant
of $8$ if we prove \cref{thm:jl} the usual way by setting
$\delta = n^{-2}$ and union bounding over all pairs of vectors.

There seems to have been some confusion in the literature regarding
the improvements in target dimension. The main pitfall was that some
papers~\citep[e.g.\@][]{Achlioptas:2003:DFRPJLwBC,
  Har-Peled:2012:ANNTRtCoD, Dasgupta:2002:aEPoaToJaL, Rojo:2010:ItJLL,
  Burr:2018:OBfJLT} were only referring to
\citep{Frankl:1988:tJLLatSosG} when referring to the target dimension
bound of the original construction. As such,
\citep{Achlioptas:2003:DFRPJLwBC,Har-Peled:2012:ANNTRtCoD} mistakenly
claim to improve the constant for the target dimension with their
constructions.  Furthermore, \citep{Achlioptas:2001:DFRP} is sometimes
\citep[e.g.\@~in][]{Ailon:2009:tFJLTaANN, Matousek:2008:oVotJLL, Schmidt:2018:SRPLGPDRfDL} the only
work credited for the Rademacher construction, despite it being
developed independently and published 2 years prior in
\citep{Arriaga:1999:aAToLRCaRP}.
%

All the constructions that have been mentioned so far in this section,
embed a vector by performing a relatively dense and unstructured
matrix-vector multiplication, which takes
$\Theta(m \|x\|_0) = \bigO(m d)$ time\footnote{$\|x\|_0$ is the number
  of nonzero entries in the vector $x$.} to compute. This sparked two
distinct but intertwined strands of research seeking to reduce the
embedding time, namely the sparsity-based JLDs which dealt with the
density of the embedding matrix and the fast Fourier transform-based
which introduced more structure to the matrix.

\subsection{Sparse Johnson--Lindenstrauss Transforms}
\label{sec:intro:sparse-jl}

The simple fact underlying the following string of results is that if
$A$ has $s$ nonzero entries per column, then $f(x) = Ax$ can be
computed in $\Theta(s\|x\|_0)$ time. The first result here is the
Achlioptas construction~\citep{Achlioptas:2003:DFRPJLwBC} mentioned
above, whose column sparsity $s$ is $m/3$ in expectancy, which leads
to an embedding time that is a third of the full Rademacher
construction\footnote{Here we ignore any overhead that switching to a
  sparse matrix representation would introduce.}. However, the first
superconstant improvement is due to \citet{Dasgupta:2010:aSJLT}, who
based on heuristic approaches~\citep{Weinberger:2009:FHfLSML,
  Shi:2009:HKfSD, Langford:2007:VWCR, Ganchev:2008:SSMbRFM}
constructed a JLD with
$s = \bigO(\eps^{-1} \log \frac{1}{\delta} \log^2
\frac{m}{\delta})$. Their construction, which we will refer to as the
DKS construction, works by sampling $s$ hash functions
$h_1, \dotsc, h_s \colon [d] \to \{-1, 1\} \times [m]$ independently,
such that each source entry $x_i$ will be hashed to $s$ random signs
$\sigma_{i,1}, \dotsc, \sigma_{i,s}$ and $s$ target coordinates
$j_{i,1}, \dotsc, j_{i, s}$ (with replacement). The embedding can then
be defined as
$f(x) \eqdef \sum_{i} \sum_{k} e_{j_{i,k}} \sigma_{i,k} x_i$, which is
to say that every source coordinate is hashed to $s$ output
coordinates and randomly added to or subtracted from those output
coordinates. The sparsity analysis was later tightened to show that
$s = \bigO(\eps^{-1} \log\frac{1}{\delta} \log\frac{m}{\delta})$
suffices~\citep{Kane:2010:aDSJLT,Kane:2014:SJLT} and even
$s = \bigO\Bigl( \eps^{-1} \bigl(\frac{\log\frac{1}{\delta}
  \log\log\log\frac{1}{\delta}}{\log\log\frac{1}{\delta}} \bigr)^2
\Bigr)$ suffices for the DKS construction assuming
$\eps < \log^{-2}\frac{1}{\delta}$ \citep{Braverman:2010:RCREGatSJLT},
while \citep{Kane:2014:SJLT} showed that
$s = \Omega(\eps^{-1} \log^2\frac{1}{\delta} / \log^2\frac{1}{\eps})$
is neccessary for the DKS construction.

\citet{Kane:2014:SJLT} present two constructions that circumvent the
DKS lower bound by ensuring that the hash functions do not collide
within a column, i.e.\@ $j_{i,a} \neq j_{i, b}$ for all $i$, $a$, and
$b$. The first construction, which we will refer to as the graph
construction, simply samples the $s$ coordinates without
replacement. The second construction, which we will refer to as the
block construction, partitions the output vector into $s$ consecutive
blocks of length $m/s$ and samples one output coordinate per
block. Note that the block construction is the same as Count Sketch
from the streaming
literature~\citep{Cormode:2005:SSTtNDAQT,Charikar:2004:FFIiDS}, though
the hash functions differ and the output is interpreted
differently. \citet{Kane:2014:SJLT} prove that
$s = \Theta(\eps^{-1} \log \frac{1}{\delta})$ is both neccessary and
sufficient in order for their two constructions to satisfy
\cref{thm:distjl}. Note that while Count Sketch is even sparser than
the lower bound for the block construction, it does not contradict it
as Count sketch does not embed into $\ell_2^{m}$ as it computes the
median, which is nonlinear. As far as general sparsity lower bounds
go, \citep{Dasgupta:2010:aSJLT} shows that an average column sparsity
of
$s_{\mathrm{avg}} = \Omega\bigl(\min \{ \eps^{-2}, \eps^{-1}
\sqrt{\log_m d}\}\bigr)$ is neccessary for a sparse JLD, while
\citet{Nelson:2013:SLBfDRM} improves upon this by showing that there
exists a set of points $X \in \R^d$ such that any JLT for that set
must have column sparsity
$s = \Omega(\eps^{-1} \log |X| / \log \frac{1}{\eps})$ in order to
satisfy \cref{thm:jl}. And so it seems that we have almost reached the
limit of the sparse JL approach, but why should theory be in the way
of a good result? Let us massage the definitions so as to get around
these lower bounds.

The hard instances used to prove the lower
bounds~\citep{Nelson:2013:SLBfDRM,Kane:2014:SJLT} consist of very
sparse vectors, e.g.\@ $x = (1/\sqrt{2}, 1/\sqrt{2}, 0, \dotsc, 0)^\T$,
but the vectors we are interested in applying a JLT to might not be so
unpleasant, and so by restricting the input vectors to be sufficiently
``nice'', we can get meaningful result that perform better than what
the pessimistic lower bound would indicate. The formal formulation of
this niceness is bounding the $\ell_\infty / \ell_2$ ratio of the
vectors \cref{thm:jl,thm:distjl} need apply to. Let us denote this
norm ratio as $\nu \in [1/\sqrt{d}, 1]$, and revisit some of the
sparse JLDs. The Achlioptas
construction~\citep{Achlioptas:2003:DFRPJLwBC} can be generalised so
that the expected number of nonzero entries per column is $qm$ rather
than $\frac{1}{3}m$ for a parameter $q \in [0,
1]$. \citet{Ailon:2009:tFJLTaANN} show that if
$\nu = \bigO\bigl( \sqrt{\log \frac{1}{\delta}} / \sqrt{d} \bigr)$
then choosing $q = \Theta\bigl(\frac{\log^2 1/\delta}{d}\bigr)$ and
sampling the nonzero entries from a Gaussian distribution
suffices. This result is generalised in \citep{Matousek:2008:oVotJLL}
by proving that for all $\nu \in [1/\sqrt{d}, 1]$ choosing
$q = \Theta(\nu^2 \log \frac{d}{\eps\delta})$ and sampling the nonzero
entries from a Rademacher distribution is a JLD for the vectors
constrained by that $\nu$.

Be aware that sometimes~\citep[e.g.\@~in][]{Dasgupta:2010:aSJLT,
  Braverman:2010:RCREGatSJLT} this bound\footnote{Which seems to be
  the only thing in \citep{Matousek:2008:oVotJLL} related to a bound
  on $q$.} on $q$ is misinterpreted as a lower bound stating that
$qm = \tilde\Omega(\eps^{-2})$ is neccessary for the Achlioptas
construction when $\nu = 1$. However, \citet{Matousek:2008:oVotJLL}
only loosely argues that his bound is tight for $\nu \leq d^{-0.1}$,
and if it indeed was tight at $\nu = 1$, the factors hidden by the
$\tilde\Omega$ would lead to the contradiction that
$m \geq qm = \Omega(\eps^{-2} \log \frac{1}{\delta} \log \frac{d}{\eps
  \delta}) = \omega(m)$.

The heuristic~\citep{Weinberger:2009:FHfLSML, Shi:2009:HKfSD,
  Langford:2007:VWCR, Ganchev:2008:SSMbRFM} that
\citep{Dasgupta:2010:aSJLT} is based on is called Feature Hashing
a.k.a.\@ the hashing trick a.k.a.\@ the hashing kernel and is a sparse
JL construction with exactly 1 nonzero entry per
column\footnote{i.e.\@ the DKS, graph, or block construction with
  $s = 1$.}. The block construction can then be viewed as the
concatenation of $s = \Theta(\eps^{-1} \log \frac{1}{\delta})$ feature
hashing instances, and the DKS construction can be viewed as the sum
of $s = \bigO(\eps^{-1} \log\frac{1}{\delta} \log\frac{m}{\delta})$
Feature Hashing instances or alternatively as first duplicating each
entry of $x \in \R^d$ $s$ times before applying Feature Hashing to the
enlarged vector $x' \in \R^{sd}$: Let
$f_{\mathsf{dup}} \colon \R^d \to \R^{sd}$ be a function that
duplicates each entry in its input $s$ times, i.e.\@
$f_{\mathsf{dup}}(x)_{(i - 1)s + j} = x'_{(i - 1)s + j} \eqdef x_i$
for $i \in [d], j \in [s]$, then
$f_{\mathsf{DKS}} = f_{\mathsf{FH}} \circ f_{\mathsf{dup}}$.

This duplication is the key to the analysis in
\citep{Dasgupta:2010:aSJLT} as $f_{\mathsf{dup}}$ is isometric (up to
normalisation) and it ensures that the $\ell_\infty / \ell_2$ ratio of
$x'$ is small, i.e.\@ $\nu \leq 1 / \sqrt{s}$ from the point of view
of the Feature Hashing data structure ($f_{\mathsf{FH}}$). And so, any
lower bound on the sparsity of the DKS construction (e.g.\@ the one
given in \citep{Kane:2014:SJLT}) gives an upper bound on the values of
$\nu$ for which Feature Hashing is a JLD: If $u$ is a unit vector such
that a DKS instance with sparsity $\hat{s}$ fails to preserve $u$s norm
within $1\pm\eps$ with probability $\delta$, then it must be the case
that Feature Hashing fails to preserve the norm of
$f_{\mathsf{dup}}(u)$ within $1\pm\eps$ with probability $\delta$, and
therefore the $\ell_\infty / \ell_2$ ratio for which Feature Hashing
can handle all vectors is strictly less than $1 / \sqrt{\hat{s}}$.

Written more concisely the statement is
$s_{\mathsf{DKS}} = \Omega(a) \implies \nu_{\mathsf{FH}} = \bigO(1 /
\sqrt{a})$ and by contraposition\footnote{Contraposition is
  $(P \implies Q) \implies (\neg Q \implies \neg P)$ and it does not
  quite prove what was just claimed without some assumptions that
  $s_{\mathsf{DKS}}$, $\nu_{\mathsf{FH}}$, and $a$ do not behave too
  erratically.}
$\nu_{\mathsf{FH}} = \Omega(1 / \sqrt{a}) \implies s_{\mathsf{DKS}} =
\bigO(a)$, where $s_{\mathsf{DKS}}$ is the minimum column sparsity of
a DKS construction that is a JLD, $\nu_{\mathsf{FH}}$ is the maximum
$\ell_\infty / \ell_2$ constraint for which Feature Hashing is a JLD,
and $a$ is any positive expression. Furthermore, if we prove an upper
bound on $\nu_{\mathsf{FH}}$ using a hard instance that is identical
to an $x'$ that the DKS construction can generate after duplication,
we can replace the previous two implications with bi-implications.

\citep{Weinberger:2009:FHfLSML} claims to give a bound on
$\nu_{\mathsf{FH}}$, but it sadly contains an error in its proof of
this bound~\citep{Dasgupta:2010:aSJLT,
  Weinberger:2010:FHfLSML}. \citet{Dahlgaard:2017:PHFfSEaDR} improve
the $\nu_{\mathsf{FH}}$ lower bound to
$\nu_{\mathsf{FH}} = \Omega\Bigl( \sqrt{\frac{\eps \log(1 +
    \frac{4}{\eps})} {\log \frac{1}{\delta} \log \frac{m}{\delta}}}
\Bigr)$, and \citet{Freksen:2018:FUtHT} give an intricate but tight
bound for $\nu_{\mathsf{FH}}$ shown in \cref{thm:fh:main}, where the
hard instance used to prove the upper bound is identical to an $x'$
from the DKS construction.
\begin{theorem}[\citep{Freksen:2018:FUtHT}]
  \label{thm:fh:main}
  There exist constants $C \geq D > 0$ such
  that for every $\eps, \delta \in (0,1)$ and
  $m \in \Npos$ the following holds. If
  $\frac{C \lg \frac{1}{\delta}}{\eps^2} \leq m <
  \frac{2}{\eps^2 \delta}$ then
  \begin{equation*}
    \nu_{\mathsf{FH}}(m, \eps, \delta) = \Theta\biggl( \sqrt{\eps} \min\Bigl\{ \frac{\log\frac{\eps m}{\log \frac{1}{\delta}}}{\log\frac{1}{\delta}}, \sqrt{\frac{\log \frac{\eps^2 m}{\log \frac{1}{\delta}}}{\log\frac{1}{\delta}}} \Bigr\} \biggr) .
  \end{equation*}
  Otherwise, if $m \geq \frac{2}{\eps^2 \delta}$ then
  $\nu_{\mathsf{FH}}(m, \eps, \delta) = 1$. Moreover if
  $m < \frac{D\lg\frac{1}{\delta}}{\eps^2}$ then
  $\nu_{\mathsf{FH}}(m, \eps, \delta)=0$.

  Furthermore, if an $x \in \{0, 1\}^d$ satisfies
  $\nu_{\mathsf{FH}} < \|x\|_2^{-1} < 1$ then
  \begin{equation*}
    \Pr_{f \sim \mathsf{FH}}\Bigl[\bigl| \|f(x)\|_2^2 - \|x\|_2^2 \bigr| > \eps\|x\|_2^2 \Bigr] > \delta.
  \end{equation*}
\end{theorem}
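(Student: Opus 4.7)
The plan is to reduce the event in \cref{eq:distjl} for $f = f_{\mathsf{FH}}$ to a tail bound on a Rademacher quadratic chaos. Normalise $\norm{x}{2} = 1$ and let $h \colon [d] \to [m]$ and $\sigma \colon [d] \to \{-1, +1\}$ be the uniform hash and sign maps defining Feature Hashing. Expanding the square gives
\[
  Z \eqdef \norm{f(x)}{2}^2 - 1 = 2\sum_{i<k} \sigma(i)\sigma(k)\, x_i x_k \, \indicator[h(i) = h(k)],
\]
so characterising $\nu_{\mathsf{FH}}(m,\eps,\delta)$ reduces to determining for which values of $\norm{x}{\infty}/\norm{x}{2}$ one has $\Pr[|Z| > \eps] \leq \delta$.

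I would first dispatch the two boundary regimes. When $m \geq 2/(\eps^2\delta)$, only diagonal pairs survive in $\E[Z^2]$, giving $\E[Z^2] \leq 2/m$, and Chebyshev yields $\Pr[|Z| > \eps] \leq 2/(m\eps^2) \leq \delta$ for \emph{every} unit vector; hence $\nu_{\mathsf{FH}} = 1$. At the other extreme, the general JLD target-dimension lower bound $m = \Omega(\eps^{-2} \log\tfrac{1}{\delta})$ from \citep{Kane:2011:AOEJLF,Jayram:2013:OBfJLTaSPwSCE} forces $\nu_{\mathsf{FH}} = 0$ once $m < D\lg(1/\delta)/\eps^2$ for a suitable constant $D$, since Feature Hashing would otherwise violate that bound even on maximally spread vectors.

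For the main intermediate regime I would prove the two directions separately. For the \emph{sufficient} direction, condition on $h$: then $Z = \sigma^\T M_h \sigma$ with $(M_h)_{ik} = x_i x_k \indicator[h(i)=h(k),\, i \neq k]$ and $\E_\sigma Z = 0$, so a Hanson--Wright / high-moment Khintchine inequality bounds the conditional tail by
\[
  \Pr_\sigma\bigl[\,|Z| > \eps \,\big|\, h\bigr] \leq 2\exp\bigl(-c \min\{\eps^2/\norm{M_h}{F}^2,\; \eps/\norm{M_h}{2}\}\bigr).
\]
Controlling the high moments of $\norm{M_h}{F}^2 = \sum_{i \neq k} x_i^2 x_k^2 \indicator[h(i) = h(k)]$ and of $\norm{M_h}{2}$ in terms of $\nu$ and bucket-occupancy counts, and then averaging over $h$, yields the claimed upper bound on $\nu$. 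The two branches of the $\min$ in \cref{thm:fh:main} correspond exactly to the two terms in the Hanson--Wright exponent: $\eps^2/\norm{M_h}{F}^2$ produces the Gaussian-regime factor $\sqrt{\log(\eps^2 m/\log(1/\delta))/\log(1/\delta)}$, while $\eps/\norm{M_h}{2}$ produces the sub-exponential-regime factor $\log(\eps m/\log(1/\delta))/\log(1/\delta)$. For the \emph{necessary} direction and the 0-1 ``moreover'' clause, I would test Feature Hashing on the indicator vector $x = \sum_{i=1}^{s} e_i \in \{0,1\}^d$ with $s \asymp 1/\nu^2$, so that $\norm{x}{2}^{-1} = 1/\sqrt{s}$ sits just above the claimed $\nu_{\mathsf{FH}}$. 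Two distinct failure modes then witness the two branches: (i) a balls-and-bins calculation shows that with probability exceeding $\delta$ some bucket receives $\Omega(\eps s)$ of the $s$ ones, which on its own pushes $|Z|$ past $\eps s = \eps\norm{x}{2}^2$; and (ii) in the Frobenius-dominated regime a Paley--Zygmund argument on $Z^2$, once $\E[Z^2 \mid h]$ is driven large by excess collisions, supplies the required anti-concentration.

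The main obstacle is matching the two branches of the $\min$ tightly. Routine Chebyshev and off-the-shelf Khintchine arguments leave polylogarithmic slack on both sides, so the core technical effort is to compute the high moments $\E_h[\norm{M_h}{F}^{2p}]$ and $\E_h[\norm{M_h}{2}^{p}]$ essentially exactly, via a sum over bucket-occupancy profiles, and simultaneously to design the flat 0-1 hard instance so that both the balls-and-bins and Hanson--Wright estimates are tight against it. The $\{0,1\}^d$ requirement of the ``moreover'' clause is already respected by the flat construction; the delicate part is certifying tightness in the transition zone where the two branches of the $\min$ are of comparable size, which is where the combinatorial accounting over occupancy profiles has to be done with some care.
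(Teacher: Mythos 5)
First, a caveat on the ground rules of this comparison: the paper does not prove \cref{thm:fh:main} at all --- it is quoted from \citep{Freksen:2018:FUtHT} as an external result --- so there is no internal proof to measure your proposal against, and it has to be judged on its own terms. At that level, your architecture is sensible and matches the general shape of known arguments: reduce to the Rademacher chaos $Z$, dispatch the two boundary regimes separately, and prove matching upper and lower bounds in the intermediate regime using flat $\{0,1\}$ hard instances (which is indeed the hard-instance family the theorem itself points to). Your treatment of the $m \geq 2/(\eps^2\delta)$ case via $\E[Z^2] \leq 2/m$ and Chebyshev is correct and complete. But for the intermediate regime --- which is the entire content of the theorem --- what you have is a plan, not a proof. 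You yourself concede that off-the-shelf Hanson--Wright/Khintchine arguments leave polylogarithmic slack and that the real work is an essentially exact computation of $\E_h[\norm{M_h}{F}^{2p}]$ and $\E_h[\norm{M_h}{2}^{p}]$ over bucket-occupancy profiles; that computation is exactly where the two branches of the $\min$ and the $\log\frac{\eps m}{\log(1/\delta)}$ factors come from, and nothing in the sketch shows the conditional chaos route closes to the stated bound rather than to one off by $\log\log$ or $\log\frac{1}{\eps}$ factors. The quantitative claims in your lower-bound sketch are similarly loose: a bucket receiving $t$ of the $s$ ones contributes at most $t^2$ to $\norm{f(x)}{2}^2$, so the relevant collision event is at scale $t \approx \sqrt{\eps s}$ together with sign alignment, not $t = \Omega(\eps s)$ on its own, and it is precisely this accounting that has to be done carefully to hit both branches of the $\min$.

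One step is wrong as stated rather than merely incomplete: the ``moreover'' clause. You derive $\nu_{\mathsf{FH}} = 0$ for $m < D\lg\frac{1}{\delta}/\eps^2$ from the general JLD lower bound $m = \Omega(\eps^{-2}\log\frac{1}{\delta})$ of \citep{Kane:2011:AOEJLF,Jayram:2013:OBfJLTaSPwSCE}. That lower bound says that no distribution over linear maps can satisfy \cref{thm:distjl} for all of $\R^d$ with smaller $m$; it gives no control over the $\ell_\infty/\ell_2$ ratio of the witnessing hard vector. Hence the contradiction argument only shows that Feature Hashing with such $m$ fails on \emph{some} vector, i.e.\@ $\nu_{\mathsf{FH}} < 1$. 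The clause $\nu_{\mathsf{FH}} = 0$ is the much stronger assertion that Feature Hashing fails for \emph{every} positive $\nu$, in particular on maximally spread vectors such as $d^{-1/2}(1, \dotsc, 1)^\T$, and that requires a direct anti-concentration argument specific to the construction (it is against this argument that the constant $D$ is calibrated in \citep{Freksen:2018:FUtHT}). As written, this step does not follow from the cited lower bounds.
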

This bound gives a tight tradeoff between target dimension $m$,
distortion $\eps$, error probability $\delta$, and
$\ell_\infty / \ell_2$ constraint $\nu$ for Feature Hashing, while
showing how to construct hard instances for Feature Hashing: Vectors
with the shape $x = (1, \dotsc, 1, 0, \dotsc, 0)^\T$ are hard
instances if they contain few $1$s, meaning that Feature Hashing
cannot preserve their norms within $1 \pm \eps$ with probability
$\delta$. \Cref{thm:fh:main} is used in \cref{thm:dks:supersparse} to
provide a tight tradeoff between $m$, $\eps$, $\delta$, $\nu$, and
column sparsity $s$ for the DKS construction.
\begin{corollary}
  \label{thm:dks:supersparse}
  Let $\nu_{\mathsf{DKS}} \in [1/\sqrt{d}, 1]$ denote the largest
  $\ell_{\infty} / \ell_2$ ratio required, $\nu_{\mathsf{FH}}$ denote
  the $\ell_{\infty} / \ell_2$ constraint for Feature Hashing as
  defined in \cref{thm:fh:main}, and $s_{\mathsf{DKS}} \in [m]$ as the
  minimum column sparsity such that the DKS construction with that
  sparsity is a JLD for the subset of vectors $x \in \R^d$ that
  satisfy $\|x\|_\infty / \|x\|_2 \leq \nu_{\mathsf{DKS}}$. Then
  \begin{equation*}
    s_{\mathsf{DKS}} = \Theta\Bigl( \frac{\nu_{\mathsf{DKS}}^2}{\nu_{\mathsf{FH}}^2} \Bigr).
  \end{equation*}
\end{corollary}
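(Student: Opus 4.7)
The bridge between the two constructions is the paper's own observation that $f_{\mathsf{DKS}} = f_{\mathsf{FH}} \circ f_{\mathsf{dup}}$, where $f_{\mathsf{dup}}$ replicates each coordinate $s$ times, suitably scaled by $1/\sqrt{s}$ so as to be isometric. The plan is to use this factorisation to reduce every statement about the DKS construction on vectors with ratio at most $\nu_{\mathsf{DKS}}$ to a statement about Feature Hashing on vectors with ratio at most $\nu_{\mathsf{DKS}}/\sqrt{s}$, and then invoke \cref{thm:fh:main} in both directions.

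For the upper bound I would set $s = \bigl\lceil \nu_{\mathsf{DKS}}^2 / \nu_{\mathsf{FH}}^2 \bigr\rceil$ so that $\nu_{\mathsf{DKS}}/\sqrt{s} \leq \nu_{\mathsf{FH}}$. Given any $x \in \R^d$ with $\|x\|_\infty/\|x\|_2 \leq \nu_{\mathsf{DKS}}$, the vector $x' \eqdef f_{\mathsf{dup}}(x)$ satisfies $\|x'\|_2 = \|x\|_2$ and $\|x'\|_\infty = \|x\|_\infty/\sqrt{s}$, hence $\|x'\|_\infty/\|x'\|_2 \leq \nu_{\mathsf{FH}}$. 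By \cref{thm:fh:main} Feature Hashing preserves $\|x'\|_2^2$ within $1 \pm \eps$ with probability at least $1 - \delta$, and because $\|x'\|_2 = \|x\|_2$ the same holds for $f_{\mathsf{DKS}}(x)$. Thus the DKS construction with this $s$ is a JLD on the prescribed subset of inputs.

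For the matching lower bound I would argue by contradiction: suppose that for some small constant $c > 0$ the DKS construction with sparsity $s \leq c\,\nu_{\mathsf{DKS}}^2/\nu_{\mathsf{FH}}^2$ is a JLD on $\{x : \|x\|_\infty/\|x\|_2 \leq \nu_{\mathsf{DKS}}\}$. Choose an integer $k$ with $1/\nu_{\mathsf{DKS}}^2 \leq k < 1/(s\,\nu_{\mathsf{FH}}^2)$ (such a $k$ exists once $c$ is chosen small enough, since the ratio of the endpoints is $1/(s\nu_{\mathsf{FH}}^2) \cdot \nu_{\mathsf{DKS}}^2 \geq 1/c$), and take $x$ to be the unit vector $(1,\dots,1,0,\dots,0)^\T/\sqrt{k}$ with exactly $k$ ones; then $\|x\|_\infty/\|x\|_2 = 1/\sqrt{k} \leq \nu_{\mathsf{DKS}}$, so $x$ is an allowed input. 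The rescaled duplicated vector $y \eqdef \sqrt{s}\, f_{\mathsf{dup}}(\sqrt{k}\,x) \in \{0,1\}^{sd}$ has $sk$ ones, so $\|y\|_2^{-1} = 1/\sqrt{sk} > \nu_{\mathsf{FH}}$. The ``furthermore'' clause of \cref{thm:fh:main} then guarantees that Feature Hashing fails to preserve $\|y\|_2^2$ within $1 \pm \eps$ with probability strictly greater than $\delta$, and since $f_{\mathsf{FH}}$ is linear this failure pulls back through the isometry $f_{\mathsf{dup}}$ (plus the overall scaling by $\sqrt{sk}$) to a failure of $f_{\mathsf{DKS}}$ on $x$, contradicting the hypothesis.

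The real content is in the lower bound, and the main obstacle is verifying that the integer $k$ can indeed be chosen in the required window; this is where the tightness of \cref{thm:fh:main}'s hard instance is used crucially, namely that its hard $x' \in \{0,1\}^{sd}$ arises exactly as the image under $f_{\mathsf{dup}}$ (up to scaling) of a flat $0/1$ vector. Everything else is bookkeeping: handling the degenerate regimes where $\nu_{\mathsf{FH}} \in \{0,1\}$ (outside the nontrivial range of \cref{thm:fh:main} the corollary is either vacuous or says $s_{\mathsf{DKS}} = \Theta(1)$), and absorbing the $\lceil \cdot \rceil$ and the constant $c$ into the $\Theta$.
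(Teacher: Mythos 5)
Your proposal is correct and follows essentially the same route as the paper: both directions factor the DKS map as $f_{\mathsf{FH}} \circ f_{\mathsf{dup}}$, use that duplication divides the $\ell_\infty/\ell_2$ ratio by $\sqrt{s}$ to invoke the positive part of \cref{thm:fh:main} for the upper bound, and use the ``furthermore'' clause on the flat $0/1$ hard instance (the paper takes the single vector $(\nu_{\mathsf{DKS}},\dotsc,\nu_{\mathsf{DKS}},0,\dotsc,0)^\T$, i.e.\@ your $k = 1/\nu_{\mathsf{DKS}}^2$ case) for the lower bound. Your write-up is if anything slightly more careful about the existence of the integer $k$ and the pull-back through linearity, which the paper leaves implicit.
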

The proof of this corollary is deferred to
\cref{sec:app:dks-supersparse}.

\citet{Jagadeesan:2019:USJLFfFH} generalised the result from
\citep{Freksen:2018:FUtHT} to give a lower bound\footnote{Here a lower
  bound refers to a lower bound on $\nu$ as a function of $m$, $\eps$,
  $\delta$, and $s$.} on the $m$, $\eps$, $\delta$, $\nu$, and $s$
tradeoff for any sparse Rademacher construction with a chosen column
sparsity, e.g.\@ the block and graph constructions, and gives a matching
upper bound for the graph construction.

\subsection{Structured Johnson--Lindenstrauss Transforms}
\label{sec:intro:fft-jl}

As we move away from the sparse JLDs we will slightly change our idea
of what an efficient JLD is. In the previous section the JLDs were
especially fast when the vectors were sparse, as the running time
scaled with $\norm{x}{0}$, whereas we in this section will optimise
for dense input vectors such that an embedding time of
$\bigO(d \log d)$ is a satisfying result.

The chronologically first asymptotic improvement over the original JLD
construction is due to \citet{Ailon:2009:tFJLTaANN} who introduced the
so-called Fast Johnson--Lindenstrauss Transform (FJLT). As mentioned
in the previous section, \citep{Ailon:2009:tFJLTaANN} showed that we
can use a very sparse (and therefore very fast) embedding matrix as
long as the vectors have a low $\ell_\infty / \ell_2$ ratio, and
furthermore that applying a randomised Walsh--Hadamard transform to a
vector results in a low $\ell_\infty / \ell_2$ ratio with high
probability. And so, the FJLT is defined as $f(x) \eqdef PHDx$, where
$P \in \R^{m \times d}$ is a sparse Achlioptas matrix with Gaussian
entries and $q = \Theta\bigl(\frac{\log^21 / \delta}{d}\bigr)$,
$H \in \{-1, 1\}^{d \times d}$ is a Walsh--Hadamard
matrix\footnote{One definition of the Walsh--Hadamard matrices is that
  the entries are $H_{ij} = (-1)^{\dotprod{i-1}{j-1}}$ for all
  $i, j \in [d]$, where $\dotprod{a}{b}$ denote the dot product of the
  ($\lg d$)-bit vectors corresponding to the binary representation of
  the numbers $a, b \in \{0, \dotsc, d-1\}$, and $d$ is a power of
  two. To illustrate its recursive nature, a large Walsh--Hadamard
  matrix can be described as a Kronecker product of smaller
  Walsh--Hadamard matrices, i.e. if $d > 2$ and $H^{(n)}$ refers to a
  $n \times n$ Walsh--Hadamard matrix, then
  $H^{(d)} = H^{(2)} \otimes H^{(d/2)} =
  \begin{pmatrix}
    H^{(d/2)} & H^{(d/2)} \\
    H^{(d/2)} & -H^{(d/2)}
  \end{pmatrix}$.}, and $D \in \{-1, 0, 1\}^{d \times d}$ is a random
diagonal matrix with i.i.d.\@ Rademachers on the diagonal. As the
Walsh--Hadamard transform can be computed using a simple recursive
formula, the expected embedding time becomes
$\bigO(d \log d + m \log^2 \frac{1}{\delta})$. And as mentioned,
\citep{Matousek:2008:oVotJLL} showed that we can sample from a
Rademacher rather than a Gaussian distribution when constructing the
matrix $P$. The embedding time improvement of FJLT over previous
constructions depends on the relationship between $m$ and $d$. If
$m = \Theta(\eps^{-2} \log \frac{1}{\delta})$ and
$m = \bigO(\eps^{-4/3} d^{1/3})$, FJLT's embedding time becomes
bounded by the Walsh--Hadamard transform at $\bigO(d \log d)$, but at
$m = \Theta(d^{1/2})$ FJLT is only barely faster than the original
construction.

\citet{Ailon:2009:FDRuRSoDBCHC} improved the running time of the FJLT
construction to $\bigO(d \log m)$ for $m = \bigO(d^{1/2 - \gamma})$
for any fixed $\gamma > 0$. The increased applicable range of $m$ was
achieved by applying multiple randomised Walsh--Hadamard
transformations, i.e.\@ replacing $HD$ with $\prod_i HD^{(i)}$, where
the $D^{(i)}$s are a constant number of independent diagonal
Rademacher matrices, as well as by replacing $P$ with $BD$ where $D$
is yet another diagonal matrix with Rademacher entries and $B$ is
consecutive blocks of specific partial Walsh--Hadamard matrices (based
on so-called binary dual BCH codes~\citep[see
e.g.\@][]{MacWilliams:1977:tToECC}). The reduction in running time comes
from altering the transform slightly by partitioning the input into
consecutive blocks of length $\text{poly}(m)$ and applying the
randomised Walsh--Hadamard transforms to each of them
independently. We will refer to this variant of FJLT as the BCHJL
construction.

The next pattern of results has roots in compressed sensing and
approaches the problem from another angle: Rather than being fast only
when $m \ll d$, they achieve $\bigO(d \log d)$ embedding time even
when $m$ is close to $d$, at the cost of $m$ being suboptimal. Before
describing these constructions, let us set the scene by briefly
introducing some concepts from compressed sensing.

Roughly speaking, compressed sensing concerns itself with recovering a
sparse signal via a small number of linear measurements and a key
concept here is the Restricted Isometry
Property~\citep{Candes:2005:DbLP, Candes:2006:NOSRfRPUES,
  Candes:2006:RUPESRfHIFI, Donoho:2006:fMLUSoEtMl1NNSAtSNS}.
\begin{definition}[Restricted Isometry Property]
  \label{def:rip}
  Let $d, m, k \in \Npos$ with $m, k < d$ and $\eps \in (0, 1)$. A linear
  function $f \colon \R^d \to \R^m$ is said to have the Restricted
  Isometry Property of order $k$ and level $\eps$ (which we will
  denote as $(k, \eps)$-RIP) if for all $x \in \R^d$ with
  $\|x\|_0 \leq k$,
  \begin{equation}
    \label{eq:rip}
    \bigl| \|f(x)\|_2^2 - \|x\|_2^2 \bigr| \leq \eps \|x\|_2^2.
  \end{equation}
\end{definition}

In the compressed sensing literature it has been
shown~\citep{Candes:2006:NOSRfRPUES,Rudelson:2008:oSRfFaGm} that the
subsampled Hadamard transform (SHT) defined as $f(x) \eqdef SHx$, has
the $(k, \eps)$-RIP with high probability for
$m = \Omega\bigl(\eps^{-2} k \log^4d \bigr)$ while allowing a vector
to be embedded in $\bigO(d \log d)$ time. Here
$H \in \{-1, 1\}^{d \times d}$ is the Walsh--Hadamard matrix and
$S \in \{0, 1\}^{m \times d}$ samples $m$ entries of $Hx$ with
replacement, i.e.\@ each row in $S$ has one non-zero entry per row,
which is chosen uniformly and independently, i.e.\@ $S$ is a uniformly
random feature selection matrix. Inspired by this transform and the
FJLT mentioned previously, \citet{Ailon:2013:aAOUFJLT} were able to
show that the subsampled randomised Hadamard transform (SRHT) defined
as $f(x) \eqdef SHDx$, is a JLT if
$m = \Theta(\eps^{-4} \log |X| \log^4 d)$. Once again $D$ denotes a
random diagonal matrix with Rademacher entries, and $S$ and $H$ is as
in the SHT. Some related results include \citet{Do:2009:FaEDRuSRM} who
before \citep{Ailon:2013:aAOUFJLT} were able to get a bound of
$m = \Theta(\eps^{-2} \log^3 |X|)$ in the large set case where
$|X| \geq d$, \citep{Tropp:2011:IAotSRHT} which showed how the SRHT
construction approximately preserves the norms of a subspace of
vectors, and \citep{Lei:2020:ISRHTfLSVM} which modified the sampling
matrix $S$ to improve precision when used as a preprocessor for
support vector machines (SVMs) by sacrificing input data independence.

This target dimension bound of \citep{Ailon:2013:aAOUFJLT} was later
tightened by \citet{Krahmer:2011:NaIJLEvtRIP}, who showed that
$m = \Theta(\eps^{-2} \log |X| \log^4 d)$ suffices for the SRHT. This
was a corollary of a more general result, namely that if
$\sigma \colon \R^d \to \R^d$ applies random signs equivalently to the
$D$ matrices mentioned previously and $f \colon \R^d \to \R^m$ has the
$\bigl(\Omega(\log |X|), \eps/4\bigr)$-RIP then $f \circ \sigma$ is a
JLT with high probability. An earlier result by
\citet{Baraniuk:2008:aSPotRIPfRM} showed that a transform sampled from
a JLD has the $\bigl(\bigO(\eps^2 m / \log d), \eps\bigr)$-RIP with
high probability. And so, as one might have expected from their
appearance, the Johnson--Lindenstrauss Lemma and the Restricted
Isometry Property are indeed cut from the same cloth.

Another transform from the compressed sensing literature uses
so-called Toeplitz or partial circulant
matrices~\citep{Bajwa:2007:TSCSM, Rauhut:2009:CaTMiCS,
  Romberg:2009:CSbRC, Haupt:2010:TCSMwAtSCE, Rauhut:2012:RIfPRCM,
  Bajwa:2012:GoRTBSMBaIfSSP, Dirksen:2019:obCSwPGCM}, which can be
defined in the following way. For $m, d \in \Npos$ we say that
$T \in \R^{m \times d}$ is a real Toeplitz matrix if there exists
$t_{-(m - 1)}, t_{-(m - 2)} \dotsc, t_{d - 1} \in \R$ such that
$T_{ij} = t_{j-i}$. This has the effect that the entries on any one
diagonal are the same (see \cref{fig:intro:toeplitz}) and computing
the matrix-vector product corresponds to computing the convolution
with a vector of the $t$s. Partial circulant matrices are special
cases of Toeplitz matrices where the diagonals ``wrap around'' the
ends of the matrix, i.e.\@ $t_{-i} = t_{d-i}$ for all $i \in [m-1]$.

\begin{figure}
  \begin{equation*}
    \begin{pmatrix}
      t_0 & t_1 & t_2 & \cdots & t_{d-1} \\
      t_{-1} & t_0 & t_1 & \cdots & t_{d-2} \\
      t_{-2} & t_{-1} & t_0 & \cdots & t_{d-3} \\
      \vdots & \vdots & \vdots & \ddots & \vdots \\
      t_{-(m-1)} & t_{-(m-2)} & t_{-(m-3)} & \cdots & t_{d - m}
    \end{pmatrix}
  \end{equation*}
  \caption{The structure of a Toeplitz matrix.}
  \label{fig:intro:toeplitz}
\end{figure}

As a JLT, the Toeplitz construction is $f(x) \eqdef TDx$, where
$T \in \{-1, 1\}^{m \times d}$ is a Toeplitz matrix with i.i.d.\@
Rademacher entries and $D \in \{-1, 0, 1\}^{d \times d}$ is a diagonal
matrix with Rademacher entries as usual. Note that the convolution of
two vectors corresponds to the entrywise product in Fourier space, and
we can therefore employ fast Fourier transform (FFT) to embed a vector
with the Toeplitz construction in time $\bigO(d \log d)$. This time
can even be reduced to $\bigO(d \log m)$ as we realise that by
partitioning $T$ into $\frac{d}{m}$ consecutive blocks of size
$m \times m$, each block is also a Toeplitz matrix, and by applying
each individually the embedding time becomes
$\bigO(\frac{d}{m} m \log m)$.

Combining the result from \citep{Krahmer:2011:NaIJLEvtRIP} with RIP
bounds for Toeplitz matrices~\citep{Rauhut:2012:RIfPRCM} gives that
$m = \Theta\bigl( \eps^{-1} \log^{3/2} |X| \log^{3/2} d + \eps^{-2}
\log |X| \log^4 d \bigr)$ is sufficient for the Toeplitz construction
to be a JLT with high probability. However, the Toeplitz construction
has also been studied directly as a JLD without going via its RIP
bounds. \citet{Hinrichs:2011:JLLfCM} showed that
$m = \Theta(\eps^{-2} \log^3 \frac{1}{\delta})$ is sufficient for the
Toeplitz construction, and this bound was improved shortly thereafter
in \citep{Vybiral:2011:aVotJLLfCM} to
$m = \Theta(\eps^{-2} \log^2 \frac{1}{\delta})$. The question then is
if we can tighten the analysis to shave off the last $\log$ factor and
get the elusive result of a JLD with optimal target dimension and
$\bigO(d \log d)$ embedding time even when $m$ is close to $d$. Sadly,
this is not the case as \citet{Freksen:2020:oUTaCMfJLT} showed that
there exists vectors\footnote{Curiously, the hard instances for the
  Toeplitz construction are very similar to the hard instances for
  Feature Hashing used in \citep{Freksen:2018:FUtHT}.}  that
necessitates $m = \Omega(\eps^{-2} \log^2 \frac{1}{\delta})$ for the
Toeplitz construction.

Just as JLTs are used as preprocessors to speed up algorithms that
solve the problems we actually care about, we can also use a JLT to
speed up other JLTs in what one could refer to as compound JLTs. More
explicitely if $f_1 \colon \R^d \to \R^{d'}$ and
$f_2 \colon \R^{d'} \to \R^m$ with $m \ll d' \ll d$ are two JLTs and
computing $f_1(x)$ is fast, we could hope that computing
$(f_2 \circ f_1)(x)$ is fast as well as $f_2$ only need to handle $d'$
dimensional vectors and hope that $(f_2 \circ f_1)$ preserves the norm
sufficiently well since both $f_1$ and $f_2$ approximately preserve
norms individually. As presented here, the obvious candidate for $f_1$
is one of the RIP-based JLDs, which was succesfully applied in
\citep{Bamberger:2017:OFJLEfLDS}. In their construction, which we will
refer to as GRHD\footnote{Due to the choice of matrix names in
  \citep{Bamberger:2017:OFJLEfLDS}.}, $f_1$ is the SRHT and $f_2$ is
the dense Rademacher construction (i.e.\@
$f(x) \eqdef A_{\mathsf{Rad}}SHDx$), and it can embed a vector in time
$\bigO(d \log m)$ for $m = \bigO(d^{1/2-\gamma})$ for any fixed
$\gamma > 0$. This is a similar result to the construction of
\citet{Ailon:2009:FDRuRSoDBCHC}, but unlike that construction, GRHD
handles the remaining range of $m$ more gracefully as for any
$r \in [1/2, 1]$ and $m = \bigO(d^r)$, the embedding time for GRHD
becomes $\bigO(d^{2r} \log^4 d)$. However the main selling point of
the GRHD construction is that it allows the simultaneous embedding of
sufficiently large sets of points $X$ to be computed in total time
$\bigO(|X| d \log m)$, even when $m = \Theta(d^{1-\gamma})$ for any
fixed $\gamma > 0$, by utilising fast matrix-matrix multiplication
techniques~\citep{Lotti:1983:otACoRMM}.

Another compound JLD is based on the so-called lean Walsh transforms
(LWT)~\citep{Liberty:2011:DFRPaLWT}, which are defined based on
so-called seed matrices. For $r, c \in \Npos$ we say that
$A_1 \in \CC^{r \times c}$ is a seed matrix if $r < c$, its columns
are of unit length, and its rows are pairwise orthogonal and have the
same $\ell_2$ norm. As such, partial Walsh--Hadamard matrices and
partial Fourier matrices are seed matrices (up to normalisation);
however, for simplicity's sake we will keep it real by focusing on
partial Walsh--Hadamard matrices. We can then define a LWT of order
$l \in \Npos$ based on this seed as
$A_l \eqdef A_1^{\otimes l} = A_1 \otimes \dotsb \otimes A_1$, where
$\otimes$ denotes the Kronecker product, which we will quickly
define. Let $A$ be a $m \times n$ matrix and $B$ be a $p \times q$
matrix, then the Kronecker product $A \otimes B$ is the $mp \times nq$
block matrix defined as
\begin{equation*}
  A \otimes B \eqdef
      \begin{pmatrix}
      A_{11}B & \cdots & A_{1n}B \\
      \vdots  & \ddots & \vdots \\
      A_{m1}B & \cdots & A_{mn}B
    \end{pmatrix}.
\end{equation*}
Note that $A_l$ is a $r^l \times c^l$ matrix and that any
Walsh--Hadamard matrix can be written as $A_l$ for some $l$ and the
$2 \times 2$ Walsh--Hadamard matrix\footnote{Here we ignore the
  $r < c$ requirement of seed matrices.} as $A_1$. Furthermore, for a
constant sized seed the time complexity of applying $A_l$ to a vector
is $O(c^l)$ by using an algorithm similar to FFT. We can then define
the compound transform which we will refer to as LWTJL, as
$f(x) \eqdef GA_lDx$, where $D \in \{-1, 1\}^{d \times d}$ is a
diagonal matrix with Rademacher entries, $A_l \in \R^{r^l \times d}$
is a LWT, and $G \in \R^{m \times r^l}$ is a JLT, and $r$ and $c$ are
constants. One way to view LWTJL is as a variant of GRHD where the
subsampling occurs on the seed matrix rather than the final
Walsh--Hadamard matrix. If $G$ can be applied in $\bigO(r^l \log r^l)$
time, e.g.\@ if $G$ is the BCHJL
construction~\citep{Ailon:2009:FDRuRSoDBCHC} and
$m = \bigO(r^{l(1/2 - \gamma)})$, the total embedding time becomes
$\bigO(d)$, as $r^l = d^\alpha$ for some $\alpha < 1$. However, in
order to prove that LWTJL satisfies \cref{thm:distjl} the analysis of
\citep{Liberty:2011:DFRPaLWT} imposes a few requirements on $r$, $c$,
and the vectors we wish to embed, namely that
$\log r / \log c \geq 1 - 2\delta$ and
$\nu = \bigO(m^{-1/2} d^{-\delta})$, where $\nu$ is an upper bound on
the $\ell_\infty / \ell_2$ ratio as introduced at the end of
\cref{sec:intro:sparse-jl}. The bound on $\nu$ is somewhat tight as
shown in \cref{thm:lwt:nu_upper_bound}.
\begin{proposition}
  \label{thm:lwt:nu_upper_bound}
  For any seed matrix define $\mathsf{LWT}$ as the LWTJL distribution
  seeded with that matrix. Then for all $\delta \in (0, 1)$, there
  exists a vector $x \in \CC^d$ (or $x \in \R^d$, if the seed matrix
  is a real matrix) satisfying
  $\|x\|_\infty / \|x\|_2 = \Theta(\log^{-1/2} \frac{1}{\delta})$ such
  that
  \begin{equation}
    \label{eq:lwt:prob0}
    \Pr_{f \sim \mathsf{LWT}}[f(x) = \zerovector] > \delta.
  \end{equation}
\end{proposition}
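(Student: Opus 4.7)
The plan is to construct $x$ so that the intermediate map $A_l D$ annihilates it with probability exceeding $\delta$; then $f(x) = G A_l D x = \zerovector$ no matter what the outer JLT $G$ is. The key tool is the Kronecker identity $A_l (v_1 \otimes \cdots \otimes v_l) = \bigotimes_j A_1 v_j$, which places every tensor product having at least one factor in $\ker(A_1)$ inside $\ker(A_l)$; and $\ker(A_1) \neq \{\zerovector\}$ since $r < c$.

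Concretely, I would pick a unit-norm $u \in \ker(A_1)$ supported on $S_0 \subseteq [c]$ of size $s$, chosen so that $|u_i|$ is (essentially) constant on $S_0$ with value $s^{-1/2}$. Setting $x \eqdef |u|^{\otimes l} \in \K^{c^l}$ then gives $\norm{x}{2} = 1$, $\norm{x}{\infty} = \norm{u}{\infty}^l$, and $\supp(x) = S_0^l$ of size $s^l$. Whenever the random diagonal sign matrix $D$ satisfies $D|_{\supp(x)} = \pm \operatorname{sgn}(u)^{\otimes l}$---a set of $2 \cdot 2^{d - s^l}$ out of $2^d$ possible $D$---we have $D x = \pm u^{\otimes l}$, which lies in $\ker(A_l)$ by the Kronecker identity, so $f(x) = \zerovector$. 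Hence $\Pr[f(x) = \zerovector] \geq 2^{1 - s^l}$.

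I would then choose $l \eqdef \lfloor \log_s \log_2 (1/\delta) \rfloor$, so that $s^l = \Theta(\log (1/\delta))$, giving $\Pr[f(x) = \zerovector] \geq 2^{1 - s^l} \geq 2 \delta > \delta$, and, for a balanced $u$, $\norm{x}{\infty}/\norm{x}{2} = s^{-l/2} = \Theta(\log^{-1/2} (1/\delta))$ as claimed. The main obstacle is the balanced-kernel-vector assumption: over $\CC$ such a $u$ can always be constructed by picking a minimal linear dependence among the columns of $A_1$ and adjusting phases, but an arbitrary real seed matrix need not admit a real kernel vector with constant magnitude on its support; then $\norm{u}{\infty} > s^{-1/2}$ strictly and the naive calculation yields a ratio of the wrong order. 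In that case I would fall back on the much larger kernel $\ker(A_l)$, whose dimension $c^l - r^l$ is huge, and pick $x$ as the indicator of the support of a well-chosen $\pm 1$ element of $\ker(A_l)$ built from Kronecker products of kernel elements of small submatrices of $A_1$; the Kronecker identity still supplies enough good sign patterns to preserve the probability bound while the support size can be tuned to give the target ratio up to constants.
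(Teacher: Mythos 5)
Your core mechanism is the same as the paper's: since $r < c$ the seed has a nontrivial kernel, the Kronecker identity pushes kernel vectors of $A_1$ into $\ker(A_l)$, and once the random signs on $\supp(x)$ align with a fixed pattern --- an event of probability $2^{-\|x\|_0}$ or $2^{1-\|x\|_0}$ --- the whole map $G A_l D$ annihilates $x$ regardless of $G$. So the target is to exhibit an $x \in \ker(A_l)$ (up to signs) with $\|x\|_0 < \lg\frac{1}{\delta}$ and the right $\ell_\infty/\ell_2$ ratio. Where your construction breaks is in how you control $\|x\|_0$: you set $x = |u|^{\otimes l}$ and then ``choose $l = \lfloor \log_s \log_2(1/\delta)\rfloor$,'' but $l$ is not a free parameter --- it is pinned by the source dimension via $c^l = d$, and the proposition quantifies over a fixed $\mathsf{LWT}$ distribution (hence fixed $d$ and $l$) before $\delta$ is even mentioned. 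With the true $l$, the support $s^l$ of your tensor power is polynomial in $d$ and the probability $2^{1-s^l}$ is far below $\delta$. The second obstacle you correctly flag yourself: for a real seed there need not be a balanced kernel vector, and for an unbalanced $u$ the ratio $\|u\|_\infty^{l'}$ comes out as the wrong power of $\log\frac{1}{\delta}$. Your fallback (an indicator of the support of a $\pm 1$-valued element of $\ker(A_l)$ built from sub-seeds) is too vague to check and is not obviously realisable for an arbitrary seed.

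The paper's construction repairs both defects at once and is worth internalising: instead of tensoring the kernel vector with itself, take a single $z \in \ker(A_1)\setminus\{\zerovector\}$ and let $x$ be $k = \lfloor \frac{1}{c}\lg\frac{1}{\delta} - 1\rfloor$ \emph{disjoint} copies of $z$ followed by zero padding, i.e.\@ $x = w \otimes z$ with $w$ a $0$--$1$ indicator of $k$ blocks. This still lies in $\ker(A_l)$ by the same Kronecker identity (only the innermost factor needs to be in $\ker(A_1)$), but now the support size $ck < \lg\frac{1}{\delta}$ is tuned by the number of copies $k$ rather than by the tensor depth, so $\Pr_D[Dx = x] = 2^{-\|x\|_0} > \delta$; and because the copies are disjoint, $\|x\|_\infty = \|z\|_\infty = \Theta(1)$ while $\|x\|_2 = \sqrt{k}\,\|z\|_2$, giving the ratio $\Theta(\log^{-1/2}\frac{1}{\delta})$ with no balancedness assumption on $z$ whatsoever. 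If you replace your tensor power with this copies-of-$z$ vector, the rest of your argument goes through essentially verbatim.
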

The proof of \cref{thm:lwt:nu_upper_bound} can be found in
\cref{sec:app:lwt-nu-bound}, and it is based on constructing $x$ as a
few copies of a vector that is orthogonal to the rows of the seed
matrix.

The last JLD we will cover is based on so-called Kac random walks, and
despite \citet{Ailon:2009:tFJLTaANN} conjecturing that such a
construction could satisfy \cref{thm:jl}, it was not until
\citet{Jain:2020:KMJaLaMOFJLT} that a proof was finally at hand. As
with the lean Walsh transforms above, let us first define Kac random
walks before describing how they can be used to construct JLDs. A Kac
random walk is a Markov chain of linear transformations, where for
each step we choose two coordinates at random and perform a random
rotation on the plane spanned by these two coordinates, or more
formally:
\begin{definition}[Kac random walk~\citep{Kac:1956:FoKT}]
  For a given dimention $d \in \Npos$, let
  $K^{(0)} \eqdef I \in \{0, 1\}^{d \times d}$ be the identity matrix,
  and for each $t > 0$ sample $(i_t, j_t) \in \binom{[d]}{2}$ and
  $\theta_t \in [0, 2\pi)$ independently and uniformly at random. Then
  define the Kac random walk of length $t$ as
  $K^{(t)} \eqdef R^{(i_t, j_t, \theta_t)} K^{(t - 1)}$, where
  $R^{(i, j, \theta)} \in \R^{d \times d}$ is the rotation in the
  $(i, j)$ plane by $\theta$ and is given by
  \begin{alignat*}{4}
    R^{(i, j, \theta)} e_k &\eqdef e_k &&\forall k \notin \{i, j\}, \\
    R^{(i, j, \theta)} (a e_i + b e_j) &\eqdef (a \cos \theta - b \sin \theta)e_i + (a \sin \theta + b \cos \theta) e_j &&.
  \end{alignat*}
\end{definition}

The main JLD introduced in \citep{Jain:2020:KMJaLaMOFJLT}, which we
will refer to as KacJL, is a compound JLD where both $f_1$ and $f_2$
consists of a Kac random walk followed by subsampling, which can be
defined more formally in the following way. Let
$T_1 \eqdef \Theta(d \log d)$ be the length of the first Kac random
walk,
$d' \eqdef \min\{d, \Theta(\eps^{-2} \log |X| \log^2 \log |X| \log^3
d) \}$ be the intermediate dimension, $T_2 \eqdef \Theta(d' \log |X|)$
be the length of the second Kac random walk, and
$m \eqdef \Theta(\eps^{-2} \log |X|)$ be the target dimension, and
then define the JLT as
$f(x) = (f_2 \circ f_1)(x) \eqdef S^{(m, d')} K^{(T_2)} S^{(d', d)}
K^{(T_1)} x$, where $K^{(T_1)} \in \R^{d \times d}$ and
$K^{(T_2)} \in \R^{d' \times d'}$ are independent Kac random walks of
length $T_1$ and $T_2$, respectively, and
$S^{(d', d)} \in \{0, 1\}^{d' \times d}$ and
$S^{(m, d')} \in \{0, 1\}^{m \times d'}$ projects onto the first $d'$
and $m$ coordinates\footnote{The paper lets $d'$ and $m$ be random
  variables, but with the way the JLD is presented here a
  deterministic projection suffices, though it may affect constants
  hiding in the big-$\bigO$ expressions.}, respectively. Since
$K^{(T)}$ can be applied in time $\bigO(T)$, the KacJL construction is
JLD with embedding time
$\bigO(d \log d + \min \{d \log |X|, \eps^{-2} \log^2 |X| \log^2 \log
|X| \log^3 d \} )$ with asymptotically optimal target dimension, and
by only applying the first part ($f_1$), KacJL achieves an embedding
time of $\bigO(d \log d)$ but with a suboptimal target dimension of
$\bigO(\eps^{-2} \log |X| \log^2 \log |X| \log^3 d)$.

\citet{Jain:2020:KMJaLaMOFJLT} also proposes a version of their JLD
construction that avoids computing trigonometric
functions\footnote{Recall that $\sin (\pi/4) = 2^{-1/2}$ and that
  similar results holds for cosine and for the other angles.} by
choosing the angles $\theta_t$ uniformly at random from the set
$\{\pi/4, 3\pi/4, 5\pi/4, 7\pi/4\}$ or even the singleton set
$\{\pi/4\}$. This comes at the cost\footnote{Note that the various Kac
  walk lengths are only shown to be sufficient, and so tighter
  analysis might shorten them and perhaps remove the cost of using
  simpler angles.} of increasing $T_1$ by a factor of $\log \log d$
and $T_2$ by a factor of $\log d$, and for the singleton case
multiplying with random signs (as we have done with the $D$ matrices
in many of the previous constructions) and projecting down onto a
random subset of coordinates rather than the $d'$ or $m$ first.

\sectionbreak

This concludes the overview of Johnson--Lindenstrauss distributions
and transforms, though there are many aspects we did not cover such as
space usage, preprocessing time, randomness usage, and norms other
than $\ell_2$. However, a summary of the main aspects we did cover
(embedding times and target dimensions of the JLDs) can be found in
\cref{tab:intro:tapestry}.

\clearpage
\pagestyle{empty}
\global\pdfpageattr\expandafter{\the\pdfpageattr/Rotate 90}
\begin{sidewaystable}[p]
  \centerfloat
  \begin{tabular}{l l l l l}
    \toprule
    Name & Embedding time & Target dimension & Ref. & Constraints \\
    \midrule
    Original & $\bigO(\|x\|_0 m)$ & $\bigO(\eps^{-2} \log \frac{1}{\delta})$ & \citep{Johnson:1984:EoLMiaHS} & \\
    Gaussian & $\bigO(\|x\|_0 m)$ & $\bigO(\eps^{-2} \log \frac{1}{\delta})$ & \citep{Har-Peled:2012:ANNTRtCoD} & \\
    Rademacher & $\bigO(\|x\|_0 m)$ & $\bigO(\eps^{-2} \log \frac{1}{\delta})$ & \citep{Arriaga:2006:aAToLRCaRP} & \\
    Achlioptas & $\bigO(\frac{1}{3}\|x\|_0 m)$ & $\bigO(\eps^{-2} \log \frac{1}{\delta})$ & \citep{Achlioptas:2003:DFRPJLwBC} & \\
    DKS & $\bigO(\|x\|_0 \eps^{-1} \log \frac{m}{\delta} \log \frac{1}{\delta})$ & $\bigO(\eps^{-2} \log \frac{1}{\delta})$ & \citep{Dasgupta:2010:aSJLT,Kane:2014:SJLT}& \\
    Block JL & $\bigO(\|x\|_0 \eps^{-1} \log \frac{1}{\delta})$ & $\bigO(\eps^{-2} \log \frac{1}{\delta})$ & \citep{Kane:2014:SJLT} & \\
    Feature Hashing & $\bigO(\|x\|_0)$ & $\bigO(\eps^{-2} \delta^{-1})$ & \citep{Weinberger:2009:FHfLSML,Freksen:2018:FUtHT} & \\
    Feature Hashing & $\bigO(\|x\|_0)$ & $\bigO(\eps^{-2} \log \frac{1}{\delta})$ & \citep{Freksen:2018:FUtHT} & $\nu \ll 1$ \\
    FJLT & $\bigO(d \log d + m \log^2 \frac{1}{\delta})$ & $\bigO(\eps^{-2} \log \frac{1}{\delta})$ & \citep{Ailon:2009:tFJLTaANN} & \\
    BCHJL & $\bigO(d \log m)$ & $\bigO(\eps^{-2} \log \frac{1}{\delta})$ & \citep{Ailon:2009:FDRuRSoDBCHC} & $m = o(d^{1/2})$ \\
    SRHT & $\bigO(d \log d)$ & $\bigO(\eps^{-2} \log |X| \log^4 d)$ & \citep{Ailon:2013:aAOUFJLT,Krahmer:2011:NaIJLEvtRIP} & \\
    SRHT & $\bigO(d \log d)$ & $\bigO(\eps^{-2} \log^3 |X|)$ & \citep{Do:2009:FaEDRuSRM} & $|X| \geq d$ \\
    Toeplitz & $\bigO(d \log m)$ & $\bigO\Bigl($\parbox{3.5cm}{$\eps^{-1} \log^{3/2} |X| \log^{3/2} d \allowbreak + \eps^{-2} \log |X| \log^4 d$}$\Bigr)$ & \citep{Krahmer:2011:NaIJLEvtRIP} & \\
    Toeplitz & $\bigO(d \log m)$ & $\bigO(\eps^{-2} \log^2 \frac{1}{\delta})$ & \citep{Hinrichs:2011:JLLfCM,Vybiral:2011:aVotJLLfCM} & \\
    Toeplitz & $\bigO(d \log m)$ & $\Omega(\eps^{-2} \log^2 \frac{1}{\delta})$ & \citep{Freksen:2020:oUTaCMfJLT} & \\
    GRHD & $\bigO(d \log m)$ & $\bigO(\eps^{-2} \log \frac{1}{\delta})$ & \citep{Bamberger:2017:OFJLEfLDS} & $m = o(d^{1/2})$ \\
    GRHD & $\bigO(d^{2r} \log^4 d)$ & $\bigO(\eps^{-2} \log \frac{1}{\delta})$ & \citep{Bamberger:2017:OFJLEfLDS} & $m = O(d^r)$ \\
    LWTJL & $\bigO(d)$ & $\bigO(\eps^{-2} \log \frac{1}{\delta})$ & \citep{Liberty:2011:DFRPaLWT} & $\nu = \bigO(m^{-1/2}d^{-\delta})$ \\
    KacJL & $\bigO\biggl($ \parbox{6.4cm}{$d \log d \\
    + \min \Bigl\{$ \parbox{4.9cm}{$d \log |X|, \\
    \eps^{-2} \log^2 |X| \log^2 \log |X| \log^3 d$} $\Bigr\}$} $\biggr)$ & $\bigO(\eps^{-2} \log |X|)$ & \citep{Jain:2020:KMJaLaMOFJLT} & \\
    KacJL & $\bigO(d \log d)$ & $\bigO(\eps^{-2} \log |X| \log^2 \log |X| \log^3 d)$ & \citep{Jain:2020:KMJaLaMOFJLT} &
    \\
    \bottomrule
  \end{tabular}
  \caption{The tapestry tabularised.}
  \label{tab:intro:tapestry}
\end{sidewaystable}

\clearpage
\global\pdfpageattr\expandafter{\the\pdfpageattr/Rotate 00}
\pagestyle{plain}


\chapter{Deferred Proofs}
\label{cha:deferred-proofs-from-intro}

\section{\texorpdfstring{$k$}{k}-Means Cost is Pairwise Distances}
\label{sec:app:k-means-cost}

Let us first repeat the lemma to remind ourselves what we need to
show.
\begin{replemma}{thm:kmeans:cost_pwd}
  Let $k, d \in \Npos$ and $X_i \subset \R^d$ for
  $i \in \{1, \dotsc, k\}$, then
  \begin{equation*}
    \sum_{i = 1}^{k} \sum_{x \in X_i} \Big\|x - \frac{1}{|X_i|}\sum_{y \in X_i} y \Big\|_2^2
    = \frac{1}{2} \sum_{i = 1}^{k} \frac{1}{|X_i|} \sum_{x, y \in X_i}\|x - y\|_2^2.
  \end{equation*}
\end{replemma}

In order to prove \cref{thm:kmeans:cost_pwd} we will need the
following lemma.
\begin{lemma}
  \label{thm:app:mult_mean_is_zero}
  Let $d \in \Npos$ and $X \subset \R^d$ and define
  $\mu \eqdef \frac{1}{|X|}\sum_{x \in X} x$ as the mean of $X$, then it holds that
  \begin{equation*}
    \sum_{x, y \in X} \dotprod{x - \mu}{y - \mu} = 0.
  \end{equation*}
\end{lemma}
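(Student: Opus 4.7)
The plan is to exploit bilinearity of the inner product to factor the double sum over $X \times X$ into an inner product of two single sums, and then observe that each single sum vanishes by the definition of the mean $\mu$.

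Concretely, I would first write
\begin{equation*}
  \sum_{x, y \in X} \dotprod{x - \mu}{y - \mu}
  = \dotprod[\Big]{\sum_{x \in X}(x - \mu)}{\sum_{y \in X}(y - \mu)},
\end{equation*}
which is valid because $\dotprod{\cdot}{\cdot}$ is bilinear, so a sum of inner products whose summands depend separately on an index $x$ in the left slot and an index $y$ in the right slot can be pulled outside in each slot independently.

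Next I would compute the inner sum: by the definition $\mu = \frac{1}{|X|}\sum_{x \in X} x$ we have
\begin{equation*}
  \sum_{x \in X}(x - \mu) = \sum_{x \in X} x - |X|\mu = |X|\mu - |X|\mu = \zerovector,
\end{equation*}
and identically for the sum indexed by $y$. Substituting back gives $\dotprod{\zerovector}{\zerovector} = 0$, which is the claim.

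There is no real obstacle here; the lemma is essentially a one-line consequence of bilinearity together with the defining property of the mean. The only thing worth stating carefully in the write-up is the step where the double sum is refactored, since this is the only place where a reader might want to pause. The remaining work in the parent proof of \cref{thm:kmeans:cost_pwd} will be to expand $\|x - \mu_i\|_2^2$ and $\|x - y\|_2^2$ using the polarisation-style identity $\|u - v\|_2^2 = \|u\|_2^2 - 2\dotprod{u}{v} + \|v\|_2^2$ and use this lemma to kill the cross terms.
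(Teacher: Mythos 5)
Your proof is correct, but it takes a slightly different route from the paper's. The paper expands $\dotprod{x - \mu}{y - \mu}$ into the four terms $\dotprod{x}{y} - \dotprod{x}{\mu} - \dotprod{y}{\mu} + \dotprod{\mu}{\mu}$, sums each over $X \times X$, rewrites the $\mu$-terms using $|X|\mu = \sum_{x \in X} x$, and watches the three resulting copies of $\sum_{x,y}\dotprod{x}{y}$ cancel with coefficients $1 - 2 + 1$. You instead use bilinearity in the opposite direction: rather than expanding the inner product, you factor the double sum as $\dotprod[\big]{\sum_{x \in X}(x - \mu)}{\sum_{y \in X}(y - \mu)}$ and observe that each factor is $\zerovector$ because $\mu$ is the mean. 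Both arguments rest on exactly the same two facts (bilinearity and $\sum_{x \in X} x = |X|\mu$), but yours is the more economical packaging --- it isolates the single identity $\sum_{x \in X}(x - \mu) = \zerovector$ that makes the lemma true, whereas the paper's version buries that fact inside a four-term cancellation. Your closing remark about how the lemma is used in the parent proof of \cref{thm:kmeans:cost_pwd} also matches what the paper actually does there.
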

\begin{proof}[Proof of \cref{thm:app:mult_mean_is_zero}]
  The lemma follows from the definition of $\mu$ and the linearity of
  the real inner product.
  \begin{align*}
    \sum_{x, y \in X} \dotprod{x - \mu}{y - \mu}
    &= \sum_{x, y \in X} \big( \dotprod{x}{y} - \dotprod{x}{\mu} - \dotprod{y}{\mu} + \dotprod{\mu}{\mu} \big) \\
    &= \sum_{x, y \in X} \dotprod{x}{y}
      - \sum_{x \in X} 2|X| \dotprod{x}{\mu}
      + |X|^2 \dotprod{\mu}{\mu} \\
    &= \sum_{x, y \in X} \dotprod{x}{y}
      - 2 \sum_{x \in X} \dotprod[\big]{x}{\sum_{y \in X} y}
      + \dotprod[\big]{\sum_{x \in X} x}{\sum_{y \in X} y} \\
    &= \sum_{x, y \in X} \dotprod{x}{y}
      - 2 \sum_{x, y \in X} \dotprod{x}{y}
      + \sum_{x, y \in X} \dotprod{x}{y} \\
    &= 0.
  \end{align*}
\end{proof}

\begin{proof}[Proof of \cref{thm:kmeans:cost_pwd}]
  We will first prove an identity for each partition, so let
  $X_i \subseteq X \subset \R^d$ be any partition of the dataset $X$
  and define $\mu_i \eqdef \frac{1}{|X_i|}\sum_{x \in X_i} x$ as the
  mean of $X_i$.
  \begin{align*}
    \frac{1}{2|X_i|} \sum_{x, y \in X_i} \|x - y \|_2^2
    &= \frac{1}{2|X_i|} \sum_{x, y \in X_i} \|(x - \mu_i) - (y - \mu_i) \|_2^2 \\
    &= \frac{1}{2|X_i|} \sum_{x, y \in X_i} \big(
      \|x - \mu_i\|_2^2 +
      \|y - \mu_i\|_2^2 -
      2 \dotprod{x - \mu_i}{y - \mu_i}
      \big) \\
    &= \sum_{x \in X_i} \|x - \mu_i\|_2^2
      - \frac{1}{2|X_i|} \sum_{x, y \in X_i} 2 \dotprod{x - \mu_i}{y - \mu_i} \\
    &= \sum_{x \in X_i} \|x - \mu_i\|_2^2,
  \end{align*}
  where the last equality holds by
  \cref{thm:app:mult_mean_is_zero}. We now substitute each term in the
  sum in \cref{thm:kmeans:cost_pwd} using the just derived identity:
  \begin{equation*}
    \sum_{i = 1}^{k} \sum_{x \in X_i} \Big\|x - \frac{1}{|X_i|}\sum_{y \in X_i} y \Big\|_2^2
    = \frac{1}{2} \sum_{i = 1}^{k} \frac{1}{|X_i|} \sum_{x, y \in X_i}\|x - y\|_2^2.
  \end{equation*}
\end{proof}

\section{Super Sparse DKS}
\label{sec:app:dks-supersparse}

The tight bounds on the performance of feature hashing presented in
\cref{thm:fh:main} can be extended to tight performance bounds for the
DKS construction. Recall that the DKS construction, parameterised by a
so-called column sparsity $s \in \Npos$, works by first mapping a
vector $x \in \R^d$ to an $x' \in \R^{sd}$ by duplicating each entry
in $x$ $s$ times and then scaling with $1/\sqrt{s}$, before applying
feature hashing to $x'$, as $x'$ has a more palatable
$\ell_\infty / \ell_2$ ratio compared to $x$. The setting for the
extended result is that if we wish to use the DKS construction but we
only need to handle vectors with a small $\|x\|_\infty / \|x\|_2$
ratio, we can choose a column sparsity smaller than the usual
$\Theta(\eps^{-1} \log \frac{1}{\delta} \log\frac{m}{\delta})$ and
still get the Johnson--Lindenstrauss guarantees. This is formalised in
\cref{thm:dks:supersparse}. The two pillars of \cref{thm:fh:main} we
use in the proof of \cref{thm:dks:supersparse} is that the feature
hashing tradeoff is tight and that we can force the DKS construction
to create hard instances for feature hashing.

\begin{repcorollary}{thm:dks:supersparse}
  Let $\nu_{\mathsf{DKS}} \in [1/\sqrt{d}, 1]$ denote the largest
  $\ell_{\infty} / \ell_2$ ratio required, $\nu_{\mathsf{FH}}$ denote
  the $\ell_{\infty} / \ell_2$ constraint for feature hashing as
  defined in \cref{thm:fh:main}, and $s_{\mathsf{DKS}} \in [m]$ as the
  minimum column sparsity such that the DKS construction with that
  sparsity is a JLD for the subset of vectors $x \in \R^d$ that
  satisfy $\|x\|_\infty / \|x\|_2 \leq \nu_{\mathsf{DKS}}$. Then
  \begin{equation}
    \label{eq:dks:supersparse}
    s_{\mathsf{DKS}} = \Theta\Bigl( \frac{\nu_{\mathsf{DKS}}^2}{\nu_{\mathsf{FH}}^2} \Bigr).
  \end{equation}
\end{repcorollary}
The upper bound part of the $\Theta$ in \cref{thm:dks:supersparse}
shows how sparse we can choose the DKS construction to be and still
get Johnson--Lindenstrauss guarantees for the data we care about,
while the lower bound shows that if we choose a sparsity below this
bound, there exists vectors who get distorted too much too often
despite having an $\ell_\infty / \ell_2$ ratio of at most
$\nu_{\mathsf{DKS}}$.

\begin{proof}[Proof of~\cref{thm:dks:supersparse}]
  Let us first prove the upper bound:
  $s_{\mathsf{DKS}} = \bigO\bigl(
  \frac{\nu_{\mathsf{DKS}}^2}{\nu_{\mathsf{FH}}^2} \bigr)$.

  Let
  $s \eqdef \Theta\bigl( \frac{\nu_{\mathsf{DKS}}^2}{\nu_{\mathsf{FH}}^2}
  \bigr) \in [m]$ be the column sparsity, and let $x \in \R^d$ be a
  unit vector with $\|x\|_\infty \leq \nu_{\mathsf{DKS}}$. The goal is
  now to show that a DKS construction with sparsity $s$ can embed $x$
  while preserving its norm within $1 \pm \eps$ with probability at
  least $1 - \delta$ (as defined in \cref{thm:distjl}). Let
  $x' \in \R^{sd}$ be the unit vector constructed by duplicating each
  entry in $x$ $s$ times and scaling with $1/\sqrt{s}$ as in the DKS
  construction. We now have
  \begin{equation}
    \label{eq:dks_cor:xprime_infty_upper}
    \|x'\|_\infty \leq \frac{\nu_{\mathsf{DKS}}}{\sqrt{s}} = \Theta(\nu_{\mathsf{FH}}).
  \end{equation}

  Let $\mathsf{DKS}$ denote the JLD from the DKS construction with
  column sparsity $s$, and let $\mathsf{FH}$ denote the feature
  hashing JLD. Then we can conclude
  \begin{align*}
    \Pr_{f \sim \mathsf{DKS}}\Bigl[ \bigl| \|f(x)\|_2^2 - 1 \bigr| \leq \eps \Bigr]
    &= \Pr_{g \sim \mathsf{FH}}\Bigl[ \bigl| \|g(x')\|_2^2 - 1 \bigr| \leq \eps \Bigr]
      \geq 1 - \delta,
  \end{align*}
  where the inequality is implied by \cref{eq:dks_cor:xprime_infty_upper,thm:fh:main}.

  Now let us prove the lower bound:
  $s_{\mathsf{DKS}} = \Omega\bigl(
  \frac{\nu_{\mathsf{DKS}}^2}{\nu_{\mathsf{FH}}^2} \bigr)$.

  Let
  $s \eqdef o\bigl( \frac{\nu_{\mathsf{DKS}}^2}{\nu_{\mathsf{FH}}^2}
  \bigr)$, and let
  $x = (\nu_{\mathsf{DKS}}, \dotsc, \nu_{\mathsf{DKS}}, 0, \dotsc,
  0)^\T \in \R^d$ be a unit vector. We now wish to show that a DKS
  construction with sparsity $s$ will preserve the norm of $x$ to
  within $1 \pm \eps$ with probability strictly less than
  $1 - \delta$. As before, define $x' \in \R^{sd}$ as the unit vector
  the DKS construction computes when duplicating every entry in $x$
  $s$ times and scaling with $1/\sqrt{s}$. This gives
  \begin{equation}
    \label{eq:dks_cor:xprime_infty_lower}
    \|x'\|_\infty = \frac{\nu_{\mathsf{DKS}}}{\sqrt{s}} = \omega(\nu_{\mathsf{FH}}).
  \end{equation}

  Finally, let $\mathsf{DKS}$ denote the JLD from the DKS construction with
  column sparsity $s$, and let $\mathsf{FH}$ denote the feature
  hashing JLD. Then we can conclude
  \begin{align*}
    \Pr_{f \sim \mathsf{DKS}}\Bigl[ \bigl| \|f(x)\|_2^2 - 1 \bigr| \leq \eps \Bigr]
    &= \Pr_{g \sim \mathsf{FH}}\Bigl[ \bigl| \|g(x')\|_2^2 - 1 \bigr| \leq \eps \Bigr]
      < 1 - \delta,
  \end{align*}
  where the inequality is implied by
  \cref{eq:dks_cor:xprime_infty_lower,thm:fh:main}, and the fact
  that $x'$ has the shape of an asymptotically worst case instance for
  feature hashing.
\end{proof}

\section{LWTJL Fails for Too Sparse Vectors}
\label{sec:app:lwt-nu-bound}

\begin{repproposition}{thm:lwt:nu_upper_bound}
  For any seed matrix define $\mathsf{LWT}$ as the LWTJL distribution
  seeded with that matrix. Then for all $\delta \in (0, 1)$, there
  exists a vector $x \in \CC^d$ (or $x \in \R^d$, if the seed matrix
  is a real matrix) satisfying
  $\|x\|_\infty / \|x\|_2 = \Theta(\log^{-1/2} \frac{1}{\delta})$ such
  that
  \begin{equation*}
    \Pr_{f \sim \mathsf{LWT}}[f(x) = \zerovector] > \delta.
  \end{equation*}
\end{repproposition}
\begin{proof}
  The main idea is to construct the vector $x$ out of segments that
  are orthogonal to the seed matrix with some probability, and then
  show that $x$ is orthogonal to all copies of the seed matrix
  simultaneously with probability larger than $\delta$.

  Let $r, c \in \Npos$ be constants and $A_1 \in \CC^{r \times c}$ be a
  seed matrix. Let $d$ be the source dimension of the LWTJL
  construction, $D \in \{-1, 0, 1\}^{d \times d}$ be the random
  diagonal matrix with i.i.d. Rademachers, $l \in \Npos$ such that
  $c^l = d$, and $A_l \in \CC^{r^l \times c^l}$ be the the LWT, i.e.
  $A_l \eqdef A_1^{\otimes l}$. Since $r < c$ there exists a
  nontrivial vector $z \in \CC^c \setminus \{\zerovector\}$ that is orthogonal
  to all $r$ rows of $A_1$ and $\|z\|_\infty = \Theta(1)$. Now define
  $x \in \CC^d$ as $k \in \Npos$ copies of $z$ followed by a padding of
  $0$s, where
  $k = \lfloor \frac{1}{c} \lg \frac{1}{\delta} - 1\rfloor$. Note that
  if the seed matrix is real, we can choose $z$ and therefore $x$ to
  be real as well.

  The first thing to note is that
  \begin{equation*}
    \|x\|_0 \leq ck < \lg \frac{1}{\delta},
  \end{equation*}
  which implies that
  \begin{equation*}
    \Pr_D[Dx = x] = 2^{-\|x\|_0} > \delta.
  \end{equation*}

  Secondly, due to the Kronecker structure of $A_l$ and the fact that
  $z$ is orthogonal to the rows of $A_1$, we have
  \begin{equation*}
    Ax = \zerovector.
  \end{equation*}
  Taken together, we can conclude
  \begin{equation*}
    \Pr_{f \sim \mathsf{LWT}}[f(x) = \zerovector] \geq \Pr_D[A_lDx = \zerovector] \geq \Pr_D[Dx = x] > \delta.
  \end{equation*}

  Now we just need to show that
  $\|x\|_\infty / \|x\|_2 = \Theta(\log^{-1/2} \frac{1}{\delta})$.
  Since $c$ is a constant and $x$ is consists of
  $k = \Theta(\log \frac{1}{\delta})$ copies of $z$ followed by
  zeroes,
  \begin{alignat*}{3}
    \|x\|_\infty &= \|z\|_\infty &&= \Theta(1), \\
    \|z\|_2 &= \Theta(1), \\
    \|x\|_2 &= \sqrt{k} \|z\|_2 &&= \Theta\bigl(\sqrt{\log \frac{1}{\delta}}\bigr),
  \end{alignat*}
  which implies the claimed ratio,
  \begin{equation*}
    \frac{\|x\|_\infty}{\|x\|_2} = \Theta(\log^{-1/2} \frac{1}{\delta}).
  \end{equation*}
\end{proof}

The following corollary is just a restatement of
\cref{thm:lwt:nu_upper_bound} in terms of \cref{thm:distjl}, and the
proof therefore follows immediately from
\cref{thm:lwt:nu_upper_bound}.
\begin{corollary}
  \label{thm:lwt:nu_upper_bound:corollary}
  For every $m, d, \in \Npos$, and $\delta, \eps \in (0, 1)$, and LWTJL
  distribution $\mathsf{LWT}$ over $f \colon \K^d \to \K^m$, where
  $\K \in \{\R, \CC\}$ and $m < d$ there exists a vector $x \in \K^d$
  with $\|x\|_\infty / \|x\|_2 = \Theta(\log^{-1/2} \frac{1}{\delta})$
  such that
  \begin{equation*}
    \Pr_{f \sim \mathsf{LWT}}\Bigl[ \, \bigl| \,\|f(x)\|_2^2 - \|x\|_2^2\, \bigr| \leq \eps \|x\|_2^2 \,\Bigr] < 1 - \delta .
  \end{equation*}
\end{corollary}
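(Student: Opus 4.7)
The plan is to exploit the Kronecker structure of $A_l = A_1^{\otimes l}$ together with the fact that the seed $A_1 \in \CC^{r \times c}$ has strictly more columns than rows. Because $r < c$, the rows of $A_1$ cannot span $\CC^c$, so there exists a nontrivial vector $z \in \CC^c \setminus \{\zerovector\}$ orthogonal to every row of $A_1$, and we may rescale so that $\|z\|_\infty = \Theta(1)$ and $\|z\|_2 = \Theta(1)$; if $A_1$ is real we can pick a real $z$. The key observation is that for any $v \in \CC^{c^{l-1}}$,
\begin{equation*}
  A_l (z \otimes v) \;=\; (A_1 \otimes A_1^{\otimes (l-1)})(z \otimes v) \;=\; (A_1 z) \otimes (A_1^{\otimes (l-1)} v) \;=\; \zerovector,
\end{equation*}
so the entire ``slice'' $\{z \otimes v : v \in \CC^{c^{l-1}}\}$ lies in the kernel of $A_l$.

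First I will build the hard instance. Set $k \eqdef \lfloor \tfrac{1}{c}\lg \tfrac{1}{\delta} - 1 \rfloor$ and define $x \eqdef z \otimes w \in \CC^d$, where $w \in \CC^{c^{l-1}}$ has its first $k$ entries equal to $1$ and the rest equal to $0$. Under the natural flattening of the Kronecker product, $x$ is simply $k$ consecutive copies of $z$ followed by a zero pad. Since $c$ is a constant, this gives $\|x\|_0 \leq ck < \lg \tfrac{1}{\delta}$ and
\begin{equation*}
  \|x\|_\infty = \|z\|_\infty = \Theta(1), \qquad \|x\|_2 = \sqrt{k}\,\|z\|_2 = \Theta\bigl(\sqrt{\lg \tfrac{1}{\delta}}\bigr),
\end{equation*}
hence $\|x\|_\infty / \|x\|_2 = \Theta(\log^{-1/2} \tfrac{1}{\delta})$ as required.

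Next I will bound the failure probability. The only randomness in $A_l D$ that interacts with $x$ lies in the diagonal entries $D_{ii}$ for $i \in \supp(x)$; the remaining signs are irrelevant. On the event that $D_{ii} = +1$ for every $i \in \supp(x)$, which has probability exactly $2^{-\|x\|_0}$, we have $Dx = x$, and therefore $A_l D x = A_l x = \zerovector$ by the Kronecker observation above. Then $f(x) = G A_l D x = \zerovector$ regardless of which JLT $G$ is used in the second stage of LWTJL, giving
\begin{equation*}
  \Pr_{f \sim \mathsf{LWT}}[f(x) = \zerovector] \;\geq\; 2^{-\|x\|_0} \;>\; 2^{-\lg(1/\delta)} \;=\; \delta.
\end{equation*}

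The main obstacle is primarily bookkeeping: one must verify that the tensor-product indexing really does realise $z \otimes w$ as $k$ contiguous copies of $z$ padded with zeros, and that $k \geq 1$ so the norm ratio bound is non-vacuous (which forces the mild assumption that $\delta$ is small enough relative to $c$, absorbed in the $\Theta$). The probabilistic step is a one-line sign-fixing argument, and the conceptual content — that $A_l$ has a large, highly-structured kernel inherited from the kernel of $A_1$ — is what drives the whole construction.
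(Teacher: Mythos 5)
Your proposal is correct and takes essentially the same route as the paper: the paper proves \cref{thm:lwt:nu_upper_bound} with exactly this construction --- $k = \lfloor\frac{1}{c}\lg\frac{1}{\delta}-1\rfloor$ copies of a vector $z$ orthogonal to the rows of the seed matrix, the sign-fixing event $Dx = x$ of probability $2^{-\|x\|_0} > \delta$, and the Kronecker-kernel observation $A_l D x = \zerovector$ --- and then states that the corollary follows immediately. The only step you leave implicit is that final translation, which you should record in one line: since $\eps < 1$ and $x \neq \zerovector$, the event $f(x) = \zerovector$ forces $\bigl|\,\|f(x)\|_2^2 - \|x\|_2^2\,\bigr| = \|x\|_2^2 > \eps\|x\|_2^2$, so the probability in the corollary is at most $1 - \Pr_{f \sim \mathsf{LWT}}[f(x) = \zerovector] < 1 - \delta$.
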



\printbibliography

\end{document}